\documentclass[lettersize,journal]{IEEEtran}

\usepackage{latexsym,bm}
\usepackage{amsmath,amssymb,amsfonts}
\usepackage{amsthm}
\usepackage{array}

\usepackage{graphicx}
\usepackage{subfigure} % currently used in sections/5results.tex
\usepackage{booktabs}
\usepackage{multirow}
\usepackage{tabularx}
\usepackage{colortbl}
\usepackage[table,dvipsnames]{xcolor}
\usepackage{arydshln}
\usepackage{lscape}

\usepackage[ruled,linesnumbered,lined,noend]{algorithm2e}

\usepackage[switch]{lineno}
\usepackage{cite}
\usepackage{url}
\usepackage{comment}
\usepackage{balance}
\usepackage{ragged2e}
\usepackage{xspace}
\usepackage{ulem}
\usepackage{enumitem}
\usepackage{textcomp}

\usepackage{listings}
\usepackage{fvextra}
\usepackage{tcolorbox}
\usepackage{pifont}
\usepackage{float}

\usepackage[
  colorlinks,
  linkcolor=blue,
  citecolor=blue,
  filecolor=blue,
  urlcolor=blue
]{hyperref}

\newtheorem{theorem}{Theorem}
\newtheorem{proposition}{Proposition}

\definecolor{light-gray}{gray}{0.95}
\definecolor{dkgreen}{rgb}{0,0.6,0}
\definecolor{gray}{rgb}{0.5,0.5,0.5}
\definecolor{mauve}{rgb}{0.58,0,0.82}

\newcommand{\tool}{\textsf{EAHC}\xspace}
\newcommand{\code}[1]{\colorbox{light-gray}{\texttt{#1}}}

\newcommand{\yg}[1]{\textcolor{black}{#1}}

\newcommand{\reasonhalluc}{reasoning hallucination\xspace}
\newcommand{\codererank}{code reranking\xspace}
\newcommand{\funcequivcluster}{\yg{execution-equivalence cluster}\xspace}

\newcommand{\dualchannel}{dual-channel\xspace}
\newcommand{\execchannel}{execution channel\xspace}
\newcommand{\reasonchannel}{reasoning channel\xspace}

\newcommand{\hierselect}{hierarchical selection\xspace}

\newcommand{\compilerinloop}{compiler-in-the-loop\xspace}

\newcommand{\execsignal}{\textit{execution signal}\xspace}
\newcommand{\reasonsignal}{\textit{reasoning signal}\xspace}
\newcommand{\execvec}{\mathbf{e}}
\newcommand{\execscore}{S_{\text{exec}}}
\newcommand{\reasonscore}{S_{\text{reason}}}
\newcommand{\hybridscore}{S_{\text{hybrid}}}

\newcommand{\testbench}{testbench\xspace}
\newcommand{\temporallogic}{temporal logic\xspace}
\newcommand{\parallelexec}{parallel execution\xspace}
\newcommand{\signalprop}{signal propagation\xspace}

\begin{document}

\title{Execution-Anchored Hallucination Calibration Reranking for Verilog Code Generation}

\author{Guang Yang, Xing Hu\IEEEauthorrefmark{1}\thanks{Corresponding author: Xing Hu.}, Xiang Chen, Terry Yue Zhuo, and Xin Xia

\IEEEcompsocitemizethanks{
\IEEEcompsocthanksitem Guang Yang is with the State Key Laboratory of Blockchain and Data Security, Zhejiang University, Hangzhou, China, and also with the Hangzhou High-Tech Zone (Binjiang) Institute of Blockchain and Data Security, Hangzhou, China. 
Xing Hu and Xin Xia are with the State Key Laboratory of Blockchain and Data Security, Zhejiang University, Hangzhou, China. 
Xiang Chen is with the School of Artificial Intelligence and Computer Science, Nantong University, Nantong, China. 
Terry Yue Zhuo is with the Department of Computer Science, Monash University and CSIRO's Data61, Australia.
E-mail: novelyg@outlook.com, xinghu@zju.edu.cn, xchencs@ntu.edu.cn, terry.zhuo@monash.edu, xin.xia@acm.org.
}

\thanks{Manuscript received April 19, 2020; revised August xx, xxxx.}}

\markboth{IEEE TRANSACTIONS ON Software Engineering,~Vol.~XX, No.~XX, XX~2026}%
{Execution-Anchored Hallucination Calibration Reranking for Verilog Code Generation}

\IEEEtitleabstractindextext{
\begin{abstract}
\justifying
Large Language Models (LLMs) have demonstrated remarkable capabilities in code generation, yet their performance degrades significantly on low-resource Hardware Description Languages such as Verilog. 
While multi-candidate sampling improves the likelihood of generating correct solutions, automatically selecting the optimal candidate remains an open challenge.
Through a systematic empirical study across nine models and two benchmarks, we identify two critical limitations: (1) existing execution-based reranking methods, which rely on testbench pass/fail outcomes, exhibit poor domain transferability due to low-quality generated testbenches; and (2) LLM-as-a-Judge suffers from \textit{reasoning hallucination}, producing inconsistent judgments for \yg{execution-equivalent} code.
These findings reveal two signal types with orthogonal errors: execution signals (deterministic but testbench coverage limited) and reasoning signals (semantically rich but hallucination-prone).
\yg{Their orthogonality suggests combining the two signals, yet in our experiments letting the reasoner directly observe execution results merely anchors its judgments on test outcomes; we therefore acquire the two signals independently and fuse them only at the decision stage.}
Based on these insights, we propose {\tool}, an \textbf{E}xecution-\textbf{A}nchored \textbf{H}allucination \textbf{C}alibration reranking framework \yg{that anchors reasoning judgments to execution behavior so that execution-equivalent candidates receive consistent scores}, which implements a dual-channel architecture: {\tool}-R, a 4B reasoning discriminator fine-tuned on 47K compiler-verified judgment traces via multi-teacher distillation; and {\tool}-T, a testbench generator leveraging RAG over a 53K corpus for execution verification.
Experiments show that {\tool} \yg{ranks first in 15 of 18 configurations and attains the best average on both benchmarks}, elevating average Pass@1 from 53.99\% to \textbf{65.10\%} on VerilogEval-v2 and from 53.18\% to \textbf{68.25\%} on ResBench, recovering over 60\% of the gap to Pass@10 oracle.
\justifying
\end{abstract}

\begin{IEEEkeywords}
Large Language Models, Verilog Code Generation, Code Reranking, LLM-as-a-Judge, Hardware Description Languages
\end{IEEEkeywords}
}

\maketitle

\IEEEdisplaynontitleabstractindextext
\IEEEpeerreviewmaketitle

% \linenumbers 
\section{Introduction}
\label{sec:intro}

% Para 1: 现实世界重要性 - HDL家族 → Verilog地位 → 代码生成重要性 → LLM迁移困境
Hardware Description Languages (HDLs), including Verilog, VHDL, and SystemVerilog, form the foundation of modern digital circuit design~\cite{flake2020verilog}.
Among these, Verilog stands out for its widespread industry adoption, serving as the standard in both ASIC and FPGA design workflows.
As integrated circuit complexity continues to grow, automated Verilog code generation has become increasingly critical to reduce development costs and alleviate engineering workload~\cite{yang2025large}.
With the rapid advancement of Large Language Models (LLMs), remarkable progress has been achieved in code generation for \textit{high-resource} programming languages, with state-of-the-art LLMs now surpassing 90\% Pass@1 on Python benchmarks like HumanEval~\cite{chen2021evaluatinglargelanguagemodels,du2024evaluating}.
Inspired by this success, researchers have attempted to transfer LLM capabilities to Verilog generation~\cite{joel2024survey}.
However, these efforts reveal a significant performance gap: even the most capable LLMs achieve only 30--50\% Pass@1 on Verilog benchmarks, far below their performance on general-purpose languages~\cite{liu2023verilogeval,thakur2024verigen}.

The performance gap stems from fundamental challenges: (1) \textit{data scarcity}, as available Verilog corpora are orders of magnitude smaller than those for general-purpose languages and (2) \textit{domain complexity}, since hardware design involves \temporallogic, \parallelexec, and \signalprop that diverge fundamentally from sequential software paradigms.
% While a natural solution is to build larger high-quality Verilog corpora, this approach is hard to scale: labeling costs grow rapidly with corpus size, and retraining large models for each domain remains costly.
A straightforward solution is to build large-scale, high-quality Verilog corpora.
However, it is limited by the escalating labeling cost and the high computational expense of retraining large models for this domain.

% Para 2: 量化Gap - 定义问题
An alternative strategy leverages the characteristics of LLM decoding: by sampling $k$ candidates at non-zero temperature, the probability that \textit{at least one} is correct (Pass@$k$) far exceeds that of a single greedy attempt (Pass@1).
Table~\ref{tab:empirical_results} reveals a striking disparity: while Pass@10 reaches 71.30\% on VerilogEval-v2~\cite{liu2023verilogeval} and 77.78\% on ResBench~\cite{guo2025resbench}, Pass@1 lags significantly at 53.99\% and 53.18\%, respectively. 
% This \textbf{17--25 percentage-point gap} represents considerable latent capability that existing approaches fail to exploit.
% However, this potential remains largely theoretical in practice.
% Engineering deployments demand a \textit{single deterministic solution}, not $k$ uncertain alternatives; manual selection is impractical as human effort scales linearly with $k$; and no effective automated method exists to identify the optimal candidate.
This \textbf{17--25 percentage-point gap} reveals considerable latent capability that existing approaches fail to fully leverage in practice. 
In real-world development scenarios, developers typically expect definitive code suggestions rather than $k$ uncertain alternatives, which require additional manual effort to evaluate and select.
This gap motivates the \textit{\codererank} problem: given $k$ candidates $\mathcal{Y}_k = \{\hat{y}_1, \ldots, \hat{y}_k\}$ for requirement $x$, design a scoring function to select the most likely correct implementation, effectively converting sampling diversity into deployment-ready accuracy.

% Para 3: Empirical Study - 两个关键发现
To better understand the limitations of existing approaches, we conduct a systematic empirical study (Section~\ref{sec:empirical}) across nine code generation models and two benchmarks, evaluating state-of-the-art reranking methods including generation probability~\cite{zhang2023av}, semantic matching~\cite{inala2022fault}, execution verification~\cite{chen2022codet}, and LLM-as-a-Judge~\cite{yang2025code}.
Our study reveals two critical findings:

\ding{73}\textbf{Finding 1: Poor Domain Transferability.}
Existing reranking methods for general-purpose languages generalize poorly to Verilog.
Probability-based approaches suffer from distributional shift between training corpora and HDL syntax, while semantic matching methods fail to capture hardware-specific constructs such as \code{always} blocks and non-blocking assignments.
Execution-based methods like CodeT~\cite{chen2022codet} harness \execsignal{}s for candidate selection, but their reliance on self-generated \testbench{}s limits effectiveness: the underlying code models lack hardware domain expertise to produce high-quality test cases.

\ding{73}\textbf{Finding 2: Reasoning Hallucination in LLM-as-a-Judge.}
In contrast to execution-based methods, LLM-as-a-Judge leverages \reasonsignal{}s and emerges as the most competitive baseline, yet it exhibits a critical flaw: \textit{\reasonhalluc}, where it may produce inconsistent correctness judgments for \yg{\textit{execution-equivalent} code, i.e., candidates with identical input--output behavior on a finite test suite (Section~\ref{sec:empirical})}.
Specifically, for candidates $\hat{y}_i$ and $\hat{y}_j$ with the same execution behavior, the LLMs may give different judgments.
This inconsistency stems from the fact that LLMs reason at the \textit{token} level without grounding in \textit{execution semantics}~\cite{wang2025open}.

% Para 4: Key Insight - 执行锚定
These findings reveal two signal types with distinct strengths and limitations.
Execution-based methods (Finding 1) leverage \textbf{\execsignal{}s}, i.e., deterministic pass/fail outcomes where \yg{execution-equivalent} code must produce identical results. 
Their strength lies in \textit{consistency}: identical behavior guarantees identical scores. 
However, they suffer from \textit{coverage gaps}, as limited testbenches cannot detect all bugs.
LLM-as-a-Judge (Finding 2) relies on \textbf{\reasonsignal{}s}, i.e., semantic correctness judgments derived from analyzing code logic. 
Their strength lies in \textit{semantic coverage}: reasoning can identify errors beyond what tests exercise. 
However, they suffer from \textit{reasoning hallucination}, producing inconsistent judgments for \yg{execution-equivalent} code.

% These findings reveal two critical phenomena in existing reranking methods: execution-based methods (Finding 1) leverage \textbf{\execsignal{}s}, i.e., deterministic pass/fail outcomes where functionally equivalent code must produce identical results; LLM-as-a-Judge (Finding 2) relies on \textbf{\reasonsignal{}s}, i.e., semantic correctness judgments that capture intent beyond test coverage but suffer from hallucination.
% Crucially, the two signal types exhibit \textit{orthogonal error characteristics}: execution errors are \textit{systematic} (arising from coverage gaps), while reasoning errors are \textit{stochastic} (arising from hallucination).

% This orthogonality suggests that execution signals can anchor the randomness in reasoning, while reasoning signals compensate for limited test coverage.
% We formalize this intuition through a \dualchannel information model (Section~\ref{sec:theory}), where the \execchannel generates \testbench{}s solely from requirement $x$, and the \reasonchannel judges correctness from $(x, \hat{y})$ pairs without observing execution results.
% Our theoretical analysis proves that such \textit{acquisition independence} strictly outperforms interactive approaches where the reasoner observes execution feedback (Theorem~\ref{thm:independent}), leading to our key insight: \textbf{\execsignal{}s and \reasonsignal{}s provide complementary evidence that should be acquired independently and fused at the decision stage}.

\yg{Crucially, the two limitations differ in character: execution errors are systematic, missing the same behaviors whenever the tests fail to exercise them, whereas reasoning errors are stochastic, varying across implementations that behave identically.
This asymmetry is what makes the signals worth combining: execution outcomes can stabilize fluctuating judgments, while reasoning can cover what the tests leave untested.}

\yg{How to combine them is less obvious.
The intuitive option is interactive: let the reasoner observe execution results before judging.
Information theory bounds what this can gain, but only when the resulting verdict is a revision of the judgment the reasoner would have produced anyway (Section~\ref{sec:theory}); the bound does not rule out every interactive design.
Empirically, however, exposing execution feedback anchors the reasoner on test outcomes and costs accuracy (Section~\ref{sec:eval}).}
\yg{We therefore keep the channels apart, a design we formalize as a} \dualchannel information model (Section~\ref{sec:theory}): the \execchannel generates \testbench{}s solely from requirement $x$, \yg{the} \reasonchannel judges correctness from $(x, \hat{y})$ pairs without observing execution results\yg{, and the two signals meet only when the final choice is made.
Our ablations support the premise behind this design: reasoning does carry information that execution misses (Section~\ref{sec:eval}).}

% Para 5: Method Overview
\yg{We realize this design as} \textbf{\tool} (\textbf{E}xecution-\textbf{A}nchored \textbf{H}allucination \textbf{C}alibration), \yg{a reranking framework with one component per channel}.
\textbf{\tool-R} \yg{implements the \reasonchannel}: we curate a 47K dataset via multi-teacher distillation with \compilerinloop verification and fine-tune Qwen3-4B to acquire Verilog-specific judgment capabilities; at inference, majority voting across multiple samples yields \yg{aggregated} \reasonsignal{}s.
\textbf{\tool-T} \yg{implements the \execchannel}: we construct a 53K dataset using the same distillation pipeline and employ RAG techniques to produce high-quality \execsignal{}s.
\yg{The two signals meet only in \hierselect, the decision stage anticipated above:}
candidates sharing identical execution vectors are grouped into \yg{execution-equivalence clusters} and assigned uniform reasoning scores, then ranked by fusing execution pass rates with \yg{cluster-level} reasoning confidence.
\yg{Since equivalent implementations share a single score, the winning cluster no longer depends on which candidate the judge happens to favor.
This is how execution anchors reasoning, and what we mean by \textbf{calibration}: consistency of judgments, not conversion of scores into probabilities.}

% Para 6: Results Highlight
We evaluate \tool on VerilogEval-v2 and ResBench across nine code generation models, including commercial (GPT-5, DeepSeek-V3, and GLM-4.6), general-purpose (Qwen2.5-Coder, Open-Coder, and Seed-Coder), and Verilog-specialized (HaVen, VeriPrefer, and CodeV-R1) models.
Results demonstrate that:
(1) \tool improves average Pass@1 from 53.99\% to \textbf{65.10\%} on VerilogEval-v2 and from 53.18\% to \textbf{68.25\%} on ResBench, outperforming the strongest baseline by \textbf{+5.91\%} and \textbf{+4.96\%} respectively\yg{, and ranking first in 15 of 18 configurations};
(2) ablation studies confirm both components are essential, with \tool-R contributing +7.12\% and \tool-T contributing +5.75\% on average;
(3) \yg{the fusion weight $\alpha{=}0.6$ performs best in our sweep}, reflecting the higher reliability of execution signals;
(4) \tool is orthogonal to training-based approaches, providing gains of 8.93\% to 19.64\% across base, SFT, and RL-optimized models; and
(5) independent fusion outperforms interactive fusion \yg{across all evaluated configurations}.

% Para 7: Contributions - 精简为3个
In summary, this paper makes the following contributions:
\begin{itemize}[leftmargin=*]
\item \textbf{Problem Formulation and Empirical Findings.} 
We formalize the Verilog \codererank problem and conduct the first systematic empirical study across nine code generation models and two benchmarks. 
% We identify two critical limitations: poor domain transferability of existing methods and \reasonhalluc in LLM-as-a-Judge, providing actionable insights for the research community.

\item \textbf{Dual-Channel Framework and Domain Resources.} 
We propose {\tool}, a \dualchannel reranking framework that \yg{acquires execution and reasoning signals independently and fuses them only at the decision stage, a choice supported by our information-theoretic analysis (Theorem~\ref{thm:independent}) and by a direct comparison against interactive fusion.
It also yields two reusable resources, VeriJudge-47K and VeriTest-53K, curated by multi-teacher distillation with \compilerinloop verification, along with the 4B judge and \testbench generator trained on them}. 

\item \textbf{State-of-the-Art Performance.} 
{\tool} \yg{attains the best average Pass@1 on both benchmarks and the best reranking accuracy in 15 of the 18 configurations} (9 models $\times$ 2 benchmarks), recovering over 60\% of the gap between Pass@1 and Pass@10. 
\end{itemize}

To support future research, we open-source our trained models and code\footnote{\url{https://github.com/NTDXYG/EAHC_CODE}} to facilitate reproducibility and future research in Verilog code generation.

% \noindent\textbf{Organization.}
% The remainder of this paper is organized as follows.
% Section~\ref{sec:background} introduces the background of Verilog code generation and reviews related work.
% Section~\ref{sec:empirical} presents our systematic empirical study that motivates the proposed approach.
% Section~\ref{sec:method} \yg{presents an information-theoretic design rationale and details} the {\tool} framework.
% Section~\ref{sec:eval} reports the experimental setup and evaluation results.
% Section~\ref{sec:discuss} discusses the broader implications, design choices, and threats to validity.
% Finally, Section~\ref{sec:conclusion} concludes the paper.
\section{Background and Related Work}
\label{sec:background}

\subsection{Verilog Code Generation}

Verilog is a hardware description language widely used for designing digital circuits, including ASICs and FPGAs~\cite{flake2020verilog}.
Recent work has explored leveraging LLMs for automated Verilog generation.
Early efforts fine-tuned code models on curated Verilog corpora~\cite{thakur2024verigen, liu2024rtlcoder}, while more recent approaches incorporate reinforcement learning with compiler feedback~\cite{wang2025large, wang2025verireason} or retrieval-augmented generation~\cite{ahmed2022verifix, qi2025verirag}.
Despite these advances, Verilog generation remains challenging due to the scarcity of training data and the complexity of hardware semantics.

\paragraph{Task Formulation.}
Given a natural language specification $x$ describing the desired hardware functionality, the goal of Verilog code generation is to produce a syntactically correct and functionally accurate Verilog module $\hat{y}$:
\begin{equation}
    \hat{y} = \mathcal{M}(x; \theta)
\end{equation}
where $\mathcal{M}$ denotes an LLM with parameters $\theta$. 
Correctness is evaluated by executing $\hat{y}$ against a ground-truth \testbench $\mathcal{T}^*$.
% For a single candidate, the binary outcome is:
% \begin{equation}
%     \text{pass}(\hat{y}) = \mathbb{1}\left[\texttt{Simulate}(\hat{y}, \mathcal{T}^*) = \texttt{PASS}\right]
% \end{equation}
When sampling $n$ candidates, the standard metric Pass@$k$ ($k \leq n$) estimates the probability that at least one of $k$ selected candidates is correct~\cite{chen2021evaluatinglargelanguagemodels}:
\begin{equation}
    \text{Pass@}k = \mathbb{E}_{\text{problems}}\left[1 - \frac{\binom{n-c}{k}}{\binom{n}{k}}\right]
\end{equation}
where $c$ denotes the number of correct candidates among $n$ samples. 
% This unbiased estimator avoids the high variance of naive sampling.
In practice, Pass@1 reflects single-attempt accuracy, while Pass@$k$ (e.g., $k{=}10$) reveals the upper-bound potential achievable through effective candidate selection.

\subsection{Code Reranking}
\label{subsec:reranking}

Code reranking addresses the problem of selecting the best candidate from multiple LLM-generated solutions.
Existing approaches can be broadly categorized into four paradigms:
(1) \textit{Generation probability}, which ranks candidates by their likelihood under the LLM~\cite{zhang2023av};
(2) \textit{Semantic matching}, which uses embedding similarity between requirements and code~\cite{neelakantan2022text};
(3) \textit{Execution verification}, which leverages test case execution as a selection signal~\cite{chen2022codet, shi2022natural};
and (4) \textit{LLM-as-a-Judge}, which employs LLMs to directly assess code correctness~\cite{yang2025code, zhao2025vrank}.
We provide detailed descriptions of baseline methods in Section~\ref{sec:empirical}.

\yg{The last two paradigms have been paired before, but only in coupled forms: a verifier consumes execution results as input features, or a model rewrites its code after seeing test failures.
Such coupling makes the judgment a function of the execution outcome, which is what our analysis bounds (Section~\ref{sec:theory}) and what our experiments find costly.
{\tool} instead keeps the two acquisitions apart, letting them meet only when the ranking is decided.}

\paragraph{Task Formulation.}
We formally define the \codererank problem as follows.
Given a natural language specification $x$ and a set of $k$ candidate implementations $\mathcal{Y}_k = \{\hat{y}_1, \hat{y}_2, \ldots, \hat{y}_k\}$ sampled from an LLM, the goal is to design a scoring function $R: \mathcal{X} \times \mathcal{Y} \rightarrow \mathbb{R}$ such that the top-ranked candidate maximizes correctness probability:
\begin{equation}
    \hat{y}^* = \arg\max_{\hat{y}_i \in \mathcal{Y}_k} R(x, \hat{y}_i)
\end{equation}
The objective is to maximize Pass@1 after reranking, thereby converting the latent potential of Pass@$k$ into realized single-attempt accuracy.
\begin{table*}[t]
\centering
\caption{Empirical study results on VerilogEval-v2 and ResBench. Best baseline results per column are \textbf{bolded}. ``--'' indicates unavailable probability for API-based models.}
\label{tab:empirical_results}
\footnotesize
\resizebox{\textwidth}{!}{
\begin{tabular}{l|ccccccccc|c}
\toprule
\textbf{Method} & \textbf{GPT-5} & \textbf{DS-V3} & \textbf{GLM-4} & \textbf{QC} & \textbf{OC} & \textbf{SC} & \textbf{HaVen} & \textbf{VeriPref} & \textbf{CodeV} & \textbf{Avg.} \\
\midrule
\multicolumn{11}{c}{\cellcolor{gray!15}\textit{VerilogEval-v2}} \\
\midrule
Pass@1 & 85.90 & 73.08 & 76.28 & 33.33 & 31.41 & 47.44 & 40.38 & 41.67 & 56.41 & 53.99 \\
\cdashline{1-11}
Probability & -- & -- & -- & 25.64 & 28.21 & 44.23 & 33.33 & 37.82 & 43.59 & 35.47 \\
CodeReviewer & -- & -- & -- & 25.64 & 28.85 & 43.59 & 32.69 & 37.82 & 43.59 & 35.36 \\
CodeRank-Q & \textbf{84.62} & 73.08 & 80.77 & 30.13 & 35.26 & 48.72 & 40.38 & 41.03 & 53.21 & 54.13 \\
CodeRank-J & 83.33 & 75.00 & 82.69 & 30.77 & 33.97 & 47.44 & 36.54 & 37.18 & 56.41 & 53.70 \\
CodeT-Self & 78.21 & 73.08 & 82.69 & 35.90 & 38.46 & 48.72 & 41.03 & 37.18 & 57.05 & 54.70 \\
CodeT-GPT & 78.21 & 72.44 & \textbf{85.90} & \textbf{39.10} & 39.10 & \textbf{52.56} & \textbf{49.36} & \textbf{53.21} & \textbf{62.82} & \textbf{59.19} \\
DiTing-1.5B & 82.69 & 74.36 & 82.05 & 35.26 & 36.54 & 48.08 & 42.31 & 46.79 & 55.13 & 55.91 \\
DiTing-7B & 83.33 & \textbf{76.28} & 82.69 & 36.54 & \textbf{39.74} & 51.28 & 46.79 & 50.64 & 58.97 & 58.47 \\
\cdashline{1-11}
Pass@10 (Oracle) & 92.31 & 85.26 & 92.95 & 53.21 & 56.41 & 67.31 & 58.33 & 66.67 & 69.23 & 71.30 \\
\midrule
\multicolumn{11}{c}{\cellcolor{gray!15}\textit{ResBench}} \\
\midrule
Pass@1 & 73.21 & 64.29 & 67.86 & 41.07 & 42.86 & 42.86 & 46.43 & 46.43 & 53.57 & 53.18 \\
\cdashline{1-11}
Probability & -- & -- & -- & 30.36 & 33.93 & 35.71 & 53.57 & 42.86 & 42.86 & 39.88 \\
CodeReviewer & -- & -- & -- & 32.14 & 30.36 & 33.93 & 57.14 & 41.07 & 44.64 & 39.88 \\
CodeRank-Q & \textbf{75.00} & 64.29 & 75.00 & 39.29 & 46.43 & 46.43 & 58.93 & 55.36 & 57.14 & 57.54 \\
CodeRank-J & 71.43 & 71.43 & 71.43 & 32.14 & 41.07 & 46.43 & 51.79 & 51.79 & 48.21 & 53.97 \\
CodeT-Self & 73.21 & 73.21 & 76.79 & 44.64 & 42.86 & 50.00 & 53.57 & 53.57 & \textbf{60.71} & 58.73 \\
CodeT-GPT & 73.21 & \textbf{76.79} & \textbf{80.36} & \textbf{57.14} & \textbf{53.57} & \textbf{60.71} & 55.36 & 53.57 & 58.93 & \textbf{63.29} \\
DiTing-1.5B & 69.64 & 67.86 & 67.86 & 48.21 & 46.43 & 55.36 & 60.71 & 50.00 & 58.93 & 58.33 \\
DiTing-7B & 71.43 & 64.29 & 71.43 & 48.21 & 48.21 & 46.43 & \textbf{66.07} & \textbf{58.93} & \textbf{60.71} & 59.52 \\
\cdashline{1-11}
Pass@10 (Oracle) & 85.71 & 83.93 & 87.50 & 73.21 & 71.43 & 73.21 & 75.00 & 75.00 & 75.00 & 77.78 \\
\bottomrule
\end{tabular}
}
\end{table*}

\section{Empirical Study}
\label{sec:empirical}

To understand the limitations of existing reranking methods on Verilog code generation, we conduct a systematic empirical study.
% This section presents our experimental setup, followed by two key findings that motivate our approach.

\subsection{Experimental Setup}

\noindent\textbf{Datasets.}
We evaluate on two established Verilog benchmarks:
(1) \textbf{VerilogEval-v2}~\cite{liu2023verilogeval}, containing 156 problems with human-written specifications and golden \testbench{}s;
(2) \textbf{ResBench}~\cite{guo2025resbench}, comprising 56 problems focused on more complex, realistic hardware designs.
Both benchmarks provide ground-truth \testbench{}s for functional verification.

\noindent\textbf{Code Generation Models.}
To ensure comprehensive coverage, we select 9 representative LLMs spanning three categories with diverse capabilities and specializations:
(1) \textit{Commercial models}: GPT-5~\cite{openai2025gpt5}, DeepSeek-V3.2~\cite{liu2024deepseek}, and GLM-4.6~\cite{glm46}, representing state-of-the-art proprietary systems with strong general reasoning abilities;
(2) \textit{General-purpose code models}: Qwen2.5-Coder-7B~\cite{hui2024qwen2}, OpenCoder-8B~\cite{huang2025opencoder}, and Seed-Coder-8B~\cite{seed2025seed}, which are open-source models optimized for code generation across multiple programming languages;
(3) \textit{Verilog-specialized models}: HaVen~\cite{yang2025haven}, VeriPrefer~\cite{wang2025insights}, and CodeV-R1~\cite{zhu2025qimeng}, which are fine-tuned specifically on Verilog corpora to enhance hardware code generation.
For each model, we sample $k{=}10$ candidates per problem using temperature $\tau{=}1.0$ to ensure sufficient diversity among generated solutions.

\noindent\textbf{Existing Reranking Methods.}
We evaluate representative methods from four paradigms:

(1) \textit{Generation Probability.}
\textbf{Prob} ranks candidates by their length-normalized log-probability under the generation model:
\begin{equation}
    R_{\text{prob}}(\hat{y}) = \frac{1}{|\hat{y}|} \sum_{t=1}^{|\hat{y}|} \log p_\theta(y_t \mid x, y_{<t})
\end{equation}
\textbf{CodeReviewer}~\cite{zhang2023av} extends this by combining forward generation probability with backward reconstruction likelihood:
\begin{equation}
    R_{\text{rev}}(x, \hat{y}) = \log p(\hat{y} \mid x) + \log p(x \mid \hat{y})
\end{equation}
The backward term measures how well the code can reconstruct the original requirement, providing a mutual information perspective.

(2) \textit{Semantic Matching.}
\textbf{CodeRank}~\cite{neelakantan2022text} projects requirements and code into a shared embedding space and ranks by cosine similarity:
\begin{equation}
    R_{\text{embed}}(x, \hat{y}) = \frac{\phi(x)^\top \phi(\hat{y})}{\|\phi(x)\| \cdot \|\phi(\hat{y})\|}
\end{equation}
where $\phi(\cdot)$ is a pretrained code embedding model. We evaluate two embedding models: Qwen3-Embedding~\cite{zhang2025qwen3} and Jina-Code-v2~\cite{kryvosheieva2025efficient}.

(3) \textit{Execution Verification.}
\textbf{CodeT}~\cite{chen2022codet} generates \testbench{}s alongside candidates and uses execution agreement as the ranking signal.
Given a generated \testbench set $\mathcal{T}$, candidates are clustered by execution outcomes, and cluster scores combine code count with test count:
\begin{equation}
    R_{\text{codet}}(\hat{y}) = |\mathcal{C}_{\hat{y}}| \times |\mathcal{T}_{\hat{y}}|
\end{equation}
where $\mathcal{C}_{\hat{y}}$ is the set of candidates sharing the same execution vector as $\hat{y}$, and $\mathcal{T}_{\hat{y}}$ is the set of tests they all pass.
We evaluate two variants: CodeT-Self (testbenches generated by the same model) and CodeT-GPT (testbenches generated by GPT-5).

(4) \textit{LLM-as-a-Judge.}
\textbf{Code-DiTing}~\cite{yang2025code} employs fine-tuned judge models to assess code correctness.
Given $n$ independent judgments, the score is computed via majority voting:
\begin{equation}
    R_{\text{judge}}(x, \hat{y}) = \frac{1}{n} \sum_{j=1}^{n} \mathbb{1}\left[F_\phi^{(j)}(x, \hat{y}) = \texttt{Yes}\right]
\end{equation}
where $F_\phi^{(j)}$ denotes the $j$-th sampled judgment from the LLM judge $F_\phi$.
We evaluate two model sizes: Code-DiTing-1.5B and Code-DiTing-7B, with majority voting.

\noindent\textbf{Evaluation Metrics.}
We report Pass@1 as the primary metric, measuring the accuracy of the top-ranked candidate.
The \textit{original Pass@1} refers to the correctness of the candidate generated via greedy decoding (temperature $\tau{=}0$), representing the default LLM output without any reranking.
For reranking methods, Pass@1 reflects the correctness of the selected candidate after ranking the $k$ sampled alternatives.
We also report Pass@$k$ (Oracle) as the upper bound, representing the probability that at least one correct solution exists among $k{=}10$ samples.
The gap between original Pass@1 and Oracle quantifies the potential improvement achievable through effective reranking.

\noindent\textbf{Implementation Details.}
All experiments are conducted on NVIDIA RTX4090 GPUs.
For LLM-as-a-Judge methods, we set $n{=}3$ sampling rounds in majority voting with temperature 0.6.
For CodeT, we generate 5 \testbench{}s per problem.
Verilog simulation is performed using Icarus Verilog.

\begin{figure*}
    \centering
    \includegraphics[width=0.9\textwidth]{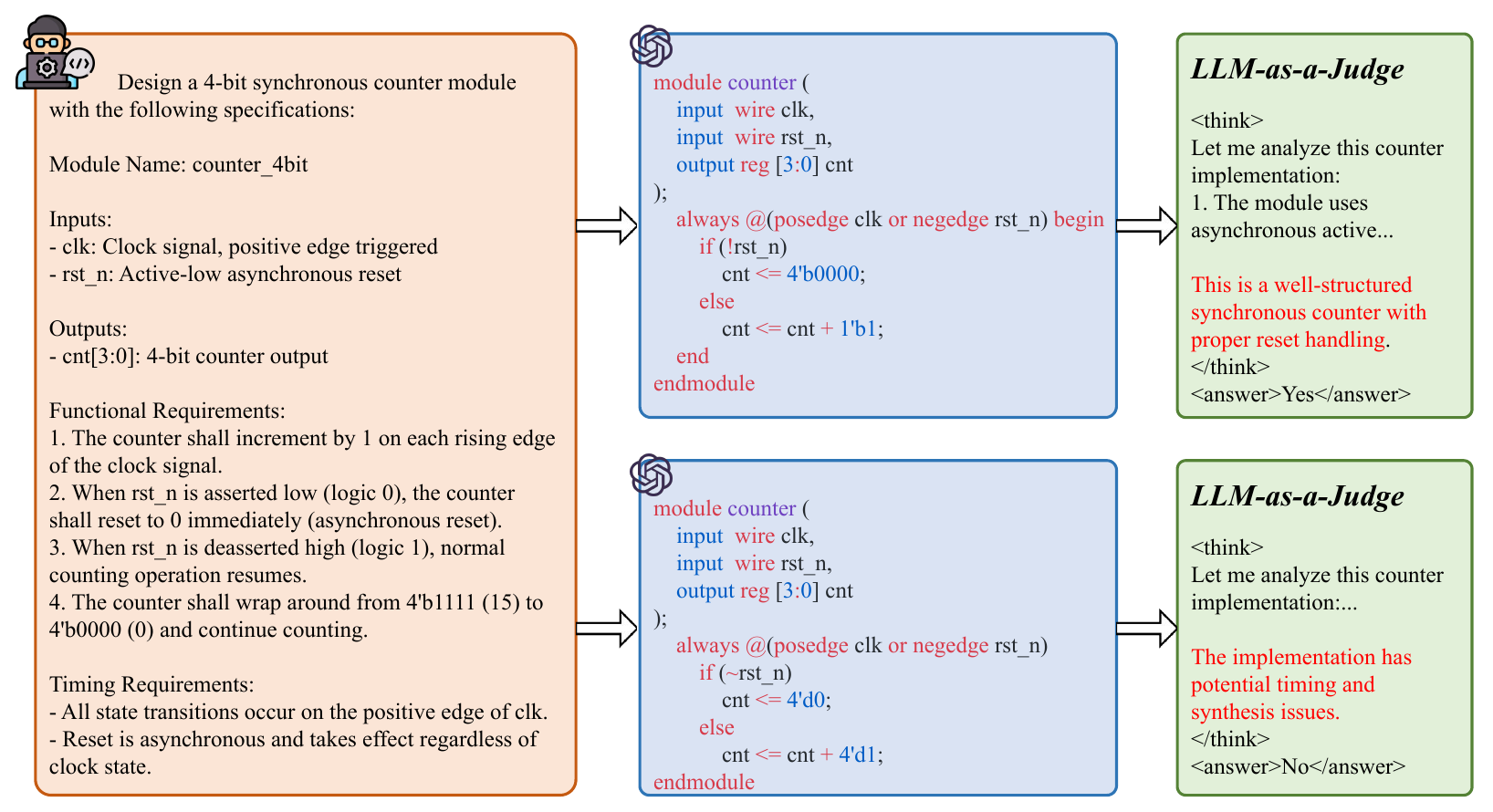}
    \caption{Example of reasoning hallucination}
    \label{fig:hallucination_example}
    \vspace{-0.5cm}
\end{figure*}

\subsection{Results and Analysis}

Table~\ref{tab:empirical_results} presents the reranking performance across all configurations.
We first analyze the characteristics of each method category, then summarize two critical findings.

\subsubsection{Method-wise Analysis}

\noindent\textbf{Generation Probability.}
Probability-based methods (Prob, CodeReviewer) consistently underperform, achieving only 35.47\% and 35.36\% average Pass@1 on VerilogEval-v2, even \textit{lower} than the original Pass@1 (53.99\%).
This degradation stems from distributional shift: Verilog syntax differs from the natural language and high-resource programming languages, causing unreliable likelihood estimates.
Notably, these methods are inapplicable to API-based models where token probabilities are unavailable.

\noindent\textbf{Semantic Matching.}
CodeRank variants show modest improvements on some models but remain inconsistent.
On VerilogEval-v2, CodeRank-Q achieves 54.13\% average Pass@1, only marginally above the original Pass@1.
The limitation lies in embedding models' inability to capture Verilog-specific semantics.

\noindent\textbf{Execution Verification.}
CodeT's performance is heavily dependent on the quality of generated \testbench{}s.
CodeT-Self, which relies on the same model to generate \testbench{}s, achieves limited improvements (54.70\% on VerilogEval-v2, 58.73\% on ResBench) due to the poor quality of self-generated test cases.
CodeT-GPT leverages GPT-5 for \testbench generation and achieves substantially better results (59.19\% and 63.29\%), emerging as the best-performing baseline on average.
However, this variant requires expensive API calls for each problem, limiting its practical applicability.
% Moreover, even GPT-5-generated \testbench{}s suffer from limited test coverage and occasional compilation failures, indicating that \testbench quality remains a fundamental bottleneck for execution-based methods.

\noindent\textbf{LLM-as-a-Judge.}
Code-DiTing shows competitive performance, with DiTing-7B achieving 58.47\% on VerilogEval-v2 and 59.52\% on ResBench.
Scaling from 1.5B to 7B yields consistent improvements (+2.5\% on average), suggesting reasoning capability matters.
However, despite being the most \textit{stable} baseline across models, LLM-as-a-Judge exhibits a critical flaw that limits its reliability (detailed in Finding 2).

\subsubsection{Finding 1: Poor Domain Transferability}
\label{subsec:finding1}

Reranking methods designed for general-purpose languages exhibit limited effectiveness on Verilog, with most failing to substantially outperform original Pass@1.
As shown in Table~\ref{tab:empirical_results}, only CodeT-GPT and DiTing-7B achieve meaningful improvements over the original Pass@1.
On VerilogEval-v2, the best baseline (CodeT-GPT, 59.19\%) still leaves a 12.11 percentage point gap to the oracle (71.30\%), indicating that \textbf{over 70\% of the potential improvement remains unrealized}.
Similar patterns emerge on ResBench, where CodeT-GPT achieves 63.29\% against an oracle of 77.78\%.

This poor transferability stems from the fundamental mismatch between method assumptions and HDL characteristics.
Probability-based methods assume reliable likelihood estimates, which fail under distributional shift.
Semantic matching relies on embeddings that lack HDL-specific representations.
Execution-based methods depend on test quality, yet self-generated \testbench{}s for Verilog exhibit low coverage and frequent compilation failures.

% \textit{Implication:} Effective Verilog reranking requires domain-specific components---either HDL-aware reasoning models or high-quality \testbench generation---rather than direct transfer from general-purpose methods.

\subsubsection{Finding 2: Reasoning Hallucination}
\label{subsec:finding2}

Among all baselines, LLM-as-a-Judge achieves the most consistent performance.
However, we identify a critical limitation: \textit{\reasonhalluc}, i.e., the tendency to produce inconsistent judgments for \yg{execution-equivalent} code.
\yg{To quantify this phenomenon, we call two candidates $\hat{y}_i$ and $\hat{y}_j$ execution-equivalent when they produce identical outputs on all ground-truth test cases, i.e., $\execvec_i = \execvec_j$.}
% We use this observational notion throughout in place of ``functional equivalence'' since a finite test suite cannot certify semantic or gate-level equivalence.}
We then examine whether LLM judgments respect this equivalence relationship.

\yg{They frequently do not.}
% For candidate pairs within the same equivalence class, DiTing-7B produces contradictory judgments (one approved, one rejected) in XX.X\% of cases on VerilogEval-v2.
Figure~\ref{fig:hallucination_example} illustrates a concrete example: two counter implementations with identical waveforms receive opposite verdicts.
One is praised for ``correct synchronous design,'' while the other is criticized for ``potential timing issues,'' despite both passing all functional tests.
The root cause is that LLMs reason at the \textit{token} level without grounding in \textit{execution semantics}.
Judgments become sensitive to superficial code variations rather than actual functional behavior.
While majority voting~\cite{chen2024more} reduces random noise, it cannot eliminate this systematic inconsistency.

% \textit{Implication:} Pure reasoning signals are insufficient for reliable Verilog reranking.
% However, execution results are \textbf{deterministic} for equivalent code---the same inputs always yield the same outputs.
% This determinism can serve as an \textit{anchor} to calibrate reasoning hallucination, ensuring that functionally equivalent implementations receive consistent scores.

% \subsection{Summary}

% Our empirical study reveals two key limitations of existing methods:
% (1) poor domain transferability from general-purpose code to Verilog;
% (2) \reasonhalluc in LLM-as-a-Judge leading to inconsistent judgments.
% These findings motivate our \tool framework, which combines execution anchoring with reasoning signals to address both limitations.
\section{Method}
\label{sec:method}
To address the limitations identified in Section~\ref{sec:empirical}, we propose \textbf{\tool}, a reranking framework that leverages execution signals to calibrate LLM reasoning, which is shown in Fig.~\ref{fig:method}.
% We first present the theoretical foundation (Section~\ref{sec:theory}), then describe {\tool}-R (Section~\ref{subsec:eahc-r}), {\tool}-T (Section~\ref{subsec:eahc-t}), and the \hierselect mechanism (Section~\ref{subsec:selection}).

\subsection{Framework Overview}
\label{subsec:overview}
Given a set of $k$ candidate implementations $\mathcal{Y}_k = \{\hat{y}_1, \ldots, \hat{y}_k\}$ for requirement $x$, \tool selects the optimal candidate through a two-stage process:
\begin{equation}
    \hat{y}^* = \underset{\hat{y} \in \mathcal{C}^*}{\arg\max}\, F_\phi(x, \hat{y}), \quad \text{where} \quad \mathcal{C}^* = \underset{\mathcal{C}_{\execvec}}{\arg\max}\, \hybridscore(\mathcal{C}_{\execvec})
\end{equation}
where $\mathcal{C}_{\execvec}$ denotes a \funcequivcluster based on execution vector $\execvec$, $\hybridscore$ is the fusion score combining execution and reasoning signals, and $F_\phi$ is the reasoning discriminator.

Algorithm~\ref{alg:eahc} summarizes the overall workflow:
(1) \textbf{Execution Anchoring}: generate \testbench{}s and obtain execution vectors for all candidates;
(2) \textbf{Equivalence Clustering}: group candidates by execution vectors;
(3) \textbf{Fusion Scoring}: compute hybrid scores for each cluster;
(4) \textbf{Hierarchical Selection}: select the best cluster, then the best individual within it.

\begin{algorithm}[t]
\caption{\tool Framework}
\label{alg:eahc}
\KwIn{Candidates $\mathcal{Y}_k$, requirement $x$, fusion weight $\alpha$}
\KwOut{Selected candidate $\hat{y}^*$}
\tcp{Phase 1: Execution Anchoring}
$\mathcal{T} \leftarrow \text{\tool-T}(x)$ \tcp*{Generate \testbench}
\For{each $\hat{y}_i \in \mathcal{Y}_k$}{
    $\execvec_i \leftarrow \texttt{Execute}(\hat{y}_i, \mathcal{T})$ \tcp*{Get execution vector}
}
\tcp{Phase 2: Equivalence Clustering}
$\{\mathcal{C}_{\execvec}\} \leftarrow \texttt{Cluster}(\mathcal{Y}_k, \{\execvec_i\})$ \tcp*{Group by execution}
\tcp{Phase 3: Fusion Scoring}
\For{each cluster $\mathcal{C}_{\execvec}$}{
    $\execscore(\mathcal{C}_{\execvec}) \leftarrow \texttt{PassRate}(\execvec)$\;
    $\reasonscore(\mathcal{C}_{\execvec}) \leftarrow \max_{\hat{y} \in \mathcal{C}_{\execvec}} F_\phi(x, \hat{y})$\;
    $\hybridscore(\mathcal{C}_{\execvec}) \leftarrow \alpha \cdot \execscore + (1-\alpha) \cdot \reasonscore$\;
}
\tcp{Phase 4: Hierarchical Selection}
$\mathcal{C}^* \leftarrow \arg\max_{\mathcal{C}_{\execvec}} \hybridscore(\mathcal{C}_{\execvec})$\;
$\hat{y}^* \leftarrow \arg\max_{\hat{y} \in \mathcal{C}^*} F_\phi(x, \hat{y})$\;
\Return{$\hat{y}^*$}
\end{algorithm}

\begin{figure*}[h]
    \centering
    \includegraphics[width=1\textwidth]{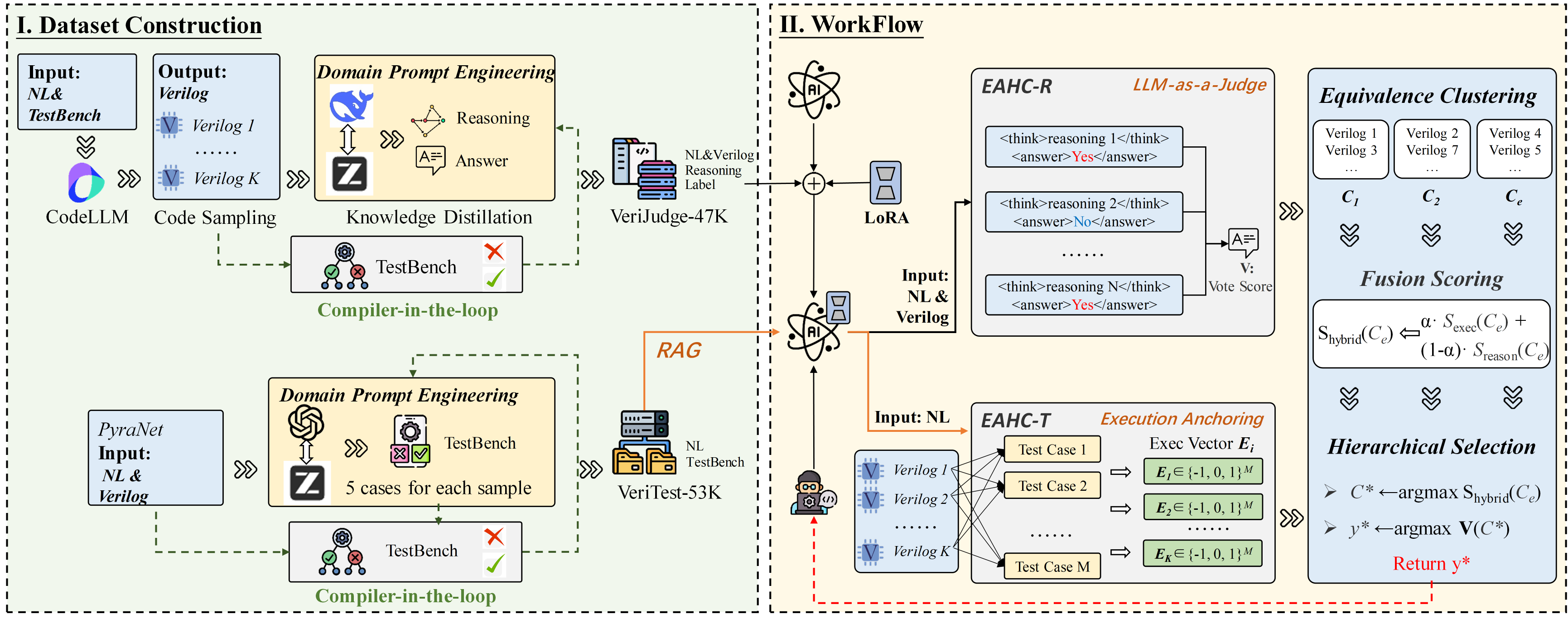}
    \caption{\tool Framework}
    \label{fig:method}
    \vspace{-0.5cm}
\end{figure*}

\subsection{\yg{Dual-Channel Information Model}}
\label{sec:theory}

\yg{This subsection asks two questions: what each channel adds to the other, and how the two should be combined.}

\subsubsection{Channel Independence}

We formalize correctness judgment as a \textit{dual-channel information acquisition problem}\yg{, in which each channel forms its signal without observing the other's}:

\begin{itemize}[leftmargin=*,nosep]
    \item \textbf{Execution Channel} $\rightarrow$ \textbf{Execution Signal} $\mathbf{E}$: Generates \testbench{}s solely from requirement $x$, without observing candidate code $\hat{y}$. The resulting execution outcomes (pass/fail on each test) constitute the \execsignal{}.
    \item \textbf{Reasoning Channel} $\rightarrow$ \textbf{Reasoning Signal} $R$: Judges correctness from $(x, \hat{y})$ pairs through semantic analysis, without observing execution results. The judgment (Yes/No with reasoning chain) constitutes the \reasonsignal{}.
\end{itemize}

\yg{Separating acquisition preserves the orthogonality of the two error sources: execution errors are \textit{systematic}, following from limited coverage, whereas reasoning errors are \textit{stochastic}, following from hallucination.}

\subsubsection{Signal Complementarity}
\yg{To reason about what the two signals jointly reveal, we adopt the approximation $P(\mathbf{E}, R \mid Y) \approx P(\mathbf{E} \mid Y) \cdot P(R \mid Y)$.
Because both channels read the same requirement $x$, this is a modeling choice rather than a property we can guarantee (Section~\ref{subsec:theory_bounds}).}

\begin{proposition}[Mutual Information Complementarity]
\label{thm:complementarity}
\yg{By the chain rule,} the joint use of $\mathbf{E}$ and $R$ satisfies
\begin{equation}
    I(Y; \mathbf{E}, R) = I(Y; \mathbf{E}) + I(Y; R \mid \mathbf{E}) \geq \max\{I(Y; \mathbf{E}), I(Y; R)\}
\end{equation}
with a strict gain iff $I(Y; R \mid \mathbf{E}) > 0$.
\end{proposition}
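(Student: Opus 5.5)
The equality is the chain rule for mutual information, and the inequality comes from applying the chain rule a second time in the opposite order and then using nonnegativity. The conditional-independence approximation introduced just before the statement is not needed for any of these steps. I will treat $Y$, $\mathbf{E}$ and $R$ as discrete random variables on finite alphabets, so every entropy is finite. This fits the setting: $Y$ is binary correctness, $\mathbf{E}$ is a finite pass/fail vector, and $R$ is a Yes/No verdict, or a vote fraction over $n$ samples.

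\textbf{Step 1: the equality.} Expand in entropies:
\begin{equation*}
I(Y;\mathbf{E},R) = H(Y) - H(Y\mid \mathbf{E},R) = \bigl[H(Y)-H(Y\mid\mathbf{E})\bigr] + \bigl[H(Y\mid\mathbf{E}) - H(Y\mid\mathbf{E},R)\bigr].
\end{equation*}
The first bracket is $I(Y;\mathbf{E})$ and the second is $I(Y;R\mid\mathbf{E})$. This gives the stated equality directly.

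\textbf{Step 2: the inequality.} Conditional mutual information is nonnegative, since it is an average over values of $\mathbf{E}$ of KL divergences. Hence $I(Y;R\mid\mathbf{E})\ge 0$, and Step~1 gives $I(Y;\mathbf{E},R)\ge I(Y;\mathbf{E})$. Applying the chain rule in the other order,
\begin{equation*}
I(Y;\mathbf{E},R)=I(Y;R)+I(Y;\mathbf{E}\mid R),
\end{equation*}
and nonnegativity gives $I(Y;\mathbf{E},R)\ge I(Y;R)$. Together these two bounds give the bound by the maximum.

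\textbf{Step 3: the strictness claim.} This is the only delicate point, because it must be read with care. Step~1 shows that $I(Y;\mathbf{E},R)>I(Y;\mathbf{E})$ if and only if $I(Y;R\mid\mathbf{E})>0$. However, if $I(Y;R)>I(Y;\mathbf{E})$, the maximum is $I(Y;R)$, and strict gain over the maximum is then governed by $I(Y;\mathbf{E}\mid R)>0$.

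I would therefore state the ``iff'' relative to the execution baseline $I(Y;\mathbf{E})$, which is the comparison the proposition intends, since execution is the anchor. I would add a remark that the symmetric condition handles the case where $R$ is the more informative channel. Finally, I would note that $I(Y;R\mid\mathbf{E})>0$ holds exactly when $Y$ and $R$ are not conditionally independent given $\mathbf{E}$, again by the KL characterization. This gives the equivalence a concrete meaning: reasoning carries information about correctness that execution misses.
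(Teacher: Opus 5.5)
Your proof is correct and follows the paper's own justification, which rests on the chain rule plus nonnegativity of conditional mutual information and likewise notes that the identity holds without the conditional-independence approximation. Your Step 3 caveat is also right: the ``iff'' is exact only for strict gain over $I(Y;\mathbf{E})$, which is the reading the paper adopts immediately afterward, whereas strict gain over the maximum requires both $I(Y;R\mid\mathbf{E})>0$ and $I(Y;\mathbf{E}\mid R)>0$.
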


\yg{The identity holds unconditionally, and the approximation above is instead what licenses the additive fusion score below.
What neither settles is whether $I(Y; R \mid \mathbf{E}) > 0$, that is, whether reasoning stays informative once execution is known.
This is an empirical question, and our ablations answer it affirmatively (Section~\ref{sec:eval}).}

\subsubsection{Independent Fusion vs. Interactive Fusion}

A natural alternative is \textit{interactive fusion}, where the reasoner observes execution results before judging.
\yg{We analyze its \emph{post-hoc} form, in which an acquired $R$ is revised in light of $\mathbf{E}$; a judge that keeps $R$ intact while separately exploiting $\mathbf{E}$ lies outside the analysis.}

\begin{theorem}[\yg{Non-Superiority of Post-Hoc Interactive Fusion}]
\label{thm:independent}
Let $R$ be acquired independently of execution, and let $R' = h(R, \mathbf{E})$. Then
\begin{equation}
    I(Y; R' \mid \mathbf{E}) \leq I(Y; R \mid \mathbf{E}),
\end{equation}
\yg{with equality when $R'$ retains all $Y$-relevant information in $R$ given $\mathbf{E}$.}
\end{theorem}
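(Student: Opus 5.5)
The proof is a conditional data-processing inequality. Since $R' = h(R, \mathbf{E})$ is a deterministic function of $(R, \mathbf{E})$, fixing $\mathbf{E}=\mathbf{e}$ makes $R' = h(R, \mathbf{e})$ a function of $R$ alone. Conditionally on $\mathbf{E}=\mathbf{e}$ we therefore have the Markov chain $Y \to R \to R'$. The standard data-processing inequality applied under each conditional law gives $I(Y; R' \mid \mathbf{E}=\mathbf{e}) \le I(Y; R \mid \mathbf{E}=\mathbf{e})$. Averaging over $\mathbf{e}$ with weights $P(\mathbf{E}=\mathbf{e})$ yields the claimed inequality.

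To make the argument self-contained, I would derive the pointwise inequality by the chain rule. Expand $I(Y; R, R' \mid \mathbf{E})$ in two ways:
\begin{equation}
I(Y; R \mid \mathbf{E}) + I(Y; R' \mid R, \mathbf{E}) = I(Y; R' \mid \mathbf{E}) + I(Y; R \mid R', \mathbf{E}).
\end{equation}
The term $I(Y; R' \mid R, \mathbf{E})$ vanishes because $R'$ is determined by $(R,\mathbf{E})$, so $H(R' \mid R, \mathbf{E}) = 0$ and hence $H(R' \mid R, \mathbf{E}, Y) = 0$. By nonnegativity of conditional mutual information, $I(Y; R \mid R', \mathbf{E}) \ge 0$, which gives the inequality directly. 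The same identity also settles the equality clause: equality holds exactly when $I(Y; R \mid R', \mathbf{E}) = 0$. This is the formal reading of ``$R'$ retains all $Y$-relevant information in $R$ given $\mathbf{E}$'', i.e.\ $R'$ is a conditionally sufficient statistic of $R$ for $Y$ given $\mathbf{E}$. An injective $h(\cdot,\mathbf{e})$ is a special case.

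The hypothesis that $R$ is ``acquired independently of execution'' is not needed for the inequality itself. It only fixes $R$ as a well-defined variable before revision, so that $R'$ is a post-hoc function of it. I would remark on this explicitly so that the theorem is not read as depending on the approximation $P(\mathbf{E},R\mid Y)\approx P(\mathbf{E}\mid Y)P(R\mid Y)$. I would also note that $h$ may be randomized, with internal randomness independent of $Y$ given $(R,\mathbf{E})$, because the Markov chain still holds in that case.

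The main obstacle is interpretive rather than technical: stating precisely what the theorem covers. I would add a closing remark that, combined with Proposition~\ref{thm:complementarity}, it gives $I(Y; \mathbf{E}, R') \le I(Y; \mathbf{E}, R)$. So post-hoc interaction cannot beat independent acquisition followed by fusion, but a reasoner that gathers new evidence after seeing $\mathbf{E}$ falls outside this bound, as the paper itself cautions.
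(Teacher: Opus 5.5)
Your proof is correct and is essentially the paper's: both rest on the data-processing inequality for the post-hoc map $R'=h(R,\mathbf{E})$. The paper applies it unconditionally to $(R',\mathbf{E})$ versus $(R,\mathbf{E})$ and then subtracts $I(Y;\mathbf{E})$ by the chain rule, while you apply it conditionally on $\mathbf{E}=\mathbf{e}$ and average; your two-way chain-rule expansion also yields the exact deficit $I(Y;R\mid\mathbf{E})-I(Y;R'\mid\mathbf{E})=I(Y;R\mid R',\mathbf{E})$, which turns the equality clause, left unargued in the paper's proof, into an if-and-only-if condition.
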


\begin{proof}
\yg{Since $R'$ is computed from $(R, \mathbf{E})$, the variables form a Markov chain $Y \to (R, \mathbf{E}) \to R'$, and the Data Processing Inequality~\cite{beaudry2011intuitive} gives $I(Y; R', \mathbf{E}) \leq I(Y; R, \mathbf{E})$.
Subtracting $I(Y; \mathbf{E})$ from both sides yields the claim.}
\end{proof}

\yg{Equivalently, $I(Y; \mathbf{E}, R') \leq I(Y; \mathbf{E}, R)$: the pair an interactive design ends up with is never more informative than the pair we keep.}
The degenerate case is instructive: a reasoner that merely echoes execution, $R'{=}f(\mathbf{E})$, satisfies $I(Y; R' \mid \mathbf{E}){=}0$ and contributes nothing.
\tool-R therefore judges \textit{without} execution feedback.
\yg{We read this as a reason to acquire the signals independently, not as a proof that independence is optimal: the result bounds information content rather than the accuracy of any particular scoring rule, and {\tool} fuses the two signals linearly rather than optimally. Section~\ref{subsec:rq3} reports what the interactive design actually costs.}

\subsubsection{Posterior Fusion}

Assuming the conditional-independence approximation, Bayes' theorem yields the additive form
\begin{equation}
    \log P(Y{=}1 \mid \mathbf{E}, R) \propto \underbrace{\log P(\mathbf{E} \mid Y{=}1)}_{\execscore} + \underbrace{\log P(R \mid Y{=}1)}_{\reasonscore}
\end{equation}
with reliability weight $\alpha$:
\begin{equation}
    \hybridscore = \alpha \cdot \execscore + (1-\alpha) \cdot \reasonscore
\end{equation}
\yg{In practice we instantiate $\execscore$ and $\reasonscore$ as bounded surrogates for these log-likelihoods, a test pass rate and a judgment frequency (Sections~\ref{subsec:eahc-t} and~\ref{subsec:eahc-r}); the derivation thus motivates the additive form rather than yielding an exact posterior.}

%==============================================================================
\subsection{EAHC-R: Reasoning Discriminator}
\label{subsec:eahc-r}

As illustrated in Fig.~\ref{fig:method} (Part II, top branch), \tool-R functions as an LLM-as-a-Judge module that estimates the functional correctness probability $F_\phi(x, \hat{y}) \in [0, 1]$ for each candidate.

\subsubsection{Independence Constraint}
To preserve channel independence, \tool-R receives \textbf{only} the requirement $x$ and candidate code $\hat{y}$ as input, \textbf{excluding} any execution information (testbench, execution results, pass rates).
The reasoner judges correctness through pure semantic analysis of code logic.

\begin{figure}[t]
\begin{tcolorbox}[colback=gray!5, colframe=gray!50, title=\small\textbf{System Prompt for Code Verification}]
\small
\textit{Please serve as a Verilog code verification expert. Your task is to analyze the provided Verilog code and determine if it correctly implements the problem requirements.}

\textbf{Code Verification Task:}
\begin{enumerate}[nosep,leftmargin=*]
    \item \textbf{Code Analysis Phase}: Conduct a thorough logic and functional analysis of the provided Verilog code.
    \item \textbf{Test Case Generation Phase}: Generate comprehensive test cases covering expected functionality, edge cases, and potential failure modes.
    \item \textbf{Verification Phase}: Evaluate whether the code will correctly pass all generated test cases.
\end{enumerate}

\textbf{Output Format:} Provide only ``Yes'' or ``No'' within the \texttt{<answer></answer>} tags.
\end{tcolorbox}
\caption{Domain prompt template for reasoning generation in \tool-R.}
\vspace{-0.5cm}
\label{fig:prompt}
\end{figure}

\subsubsection{Training Data Construction}
As shown in Fig.~\ref{fig:method} (Part I, top pipeline), we construct the VeriJudge-47K high-quality $(x, \hat{y}, \text{reasoning}, \text{label})$ dataset through multi-teacher knowledge distillation:

\begin{enumerate}[leftmargin=*,nosep]
    \item \textbf{Candidate Generation}: We use Seed-Coder to sample diverse Verilog candidates from existing datasets with ground-truth testbenches~\cite{wei2025vericoder}.
    \item \textbf{Reasoning Generation}: For each candidate, we apply domain prompt engineering to elicit reasoning chains from two teacher models: DeepSeek-R1 and GLM-4. We deliberately employ multiple teachers to increase reasoning diversity and reduce single-model bias. The structured outputs contain \texttt{<think>}...\texttt{</think>} reasoning and \texttt{<answer>}Yes/No\texttt{</answer>} verdicts. Fig.~\ref{fig:prompt} illustrates our verification prompt template.
    \item \textbf{Compiler-in-the-loop Verification}: We execute candidates against ground-truth testbenches using Icarus Verilog. Only samples where the LLM's verdict aligns with execution results are retained, effectively filtering hallucinated reasoning.
\end{enumerate}

\subsubsection{Model Training and Inference}
We fine-tune Qwen3-4B on the curated VeriJudge-47K dataset using LoRA with PiSSA~\cite{meng2024pissa} initialization.
The lightweight 4B model enables single-GPU deployment while reducing high-latency inference cost compared to commercial API calls. 
At inference, we sample $n{=}3$ reasoning chains per candidate and aggregate via majority voting to obtain the \reasonsignal{}:
\begin{equation}
    F_\phi(x, \hat{y}) = \frac{1}{n} \sum_{j=1}^{n} \mathbb{1}\left[F_\phi^{(j)}(x, \hat{y}) = \texttt{Yes}\right]
\end{equation}
where $F_\phi^{(j)}$ denotes the $j$-th sampled reasoning chain.
\yg{$F_\phi$ is thus an endorsement frequency rather than a probability of correctness, and reranking depends only on the order it induces. Voting removes variance across samples for a single candidate; consistency across candidates is enforced later by execution anchoring (Section~\ref{subsec:selection}).}

%==============================================================================
\subsection{EAHC-T: Testbench Generator}
\label{subsec:eahc-t}

\tool-T generates high-quality testbenches to obtain execution signals for anchoring (see Fig.~\ref{fig:method}, bottom-left for data construction and right side for workflow).

\subsubsection{Independence Constraint}
As illustrated in the workflow (Fig.~\ref{fig:method}, right), \tool-T receives \textbf{only} the natural language requirement as input, \textbf{excluding} candidate implementations.
This design ensures \yg{that the same testbench fairly evaluates all $k$ candidates and that the execution signal $\mathbf{E}$ is acquired without access to the reasoning signal $R$}.

\subsubsection{VeriTest-53K: Testbench Knowledge Distillation}
As shown in Fig.~\ref{fig:method} (bottom-left), we construct the VeriTest-53K high-quality $(x, h, \text{testbenches})$ dataset through the following pipeline:

\begin{enumerate}[leftmargin=*,nosep]
    \item \textbf{Seed Data Collection}: We collect NL-Verilog pairs from PyraNet~\cite{nadimi2025pyranet} as seed examples, deliberately using a different data source from VeriJudge-47K. \yg{This separation keeps the two channels from being adapted on the same corpus, which limits one avenue for correlated behavior without by itself establishing conditional independence.}
    \item \textbf{Domain Prompt Engineering}: Using GPT-4o and GLM-4.6 as teachers, we generate 5 testbench cases for each sample through carefully designed prompts. Employing multiple teachers improves test coverage diversity, as different LLMs tend to generate test cases focusing on different aspects.
    \item \textbf{Compiler-in-the-loop Verification}: Generated testbenches are validated using Icarus Verilog. Only samples that compile successfully and follow correct format are retained.
\end{enumerate}

\yg{\noindent Both corpora are screened against the evaluation benchmarks; Section~\ref{subsec:contamination} quantifies the residual overlap.}

\begin{figure}[t]
\begin{tcolorbox}[colback=gray!5, colframe=gray!50, title=\small\textbf{RAG Prompt for Testbench Generation}]
\small
\textbf{System:} You are a Verilog testbench generation expert.

\textbf{User:} Please search the knowledge base for relevant testbenches and then generate the testbench.

Problem: \textit{[requirement $x$]}\\
Module Header: \textit{[interface $h$]}

\textbf{Assistant:} Here are the retrieved testbenches from the knowledge base:

\textit{[Top-$K$ retrieved examples $\mathcal{R}_K$]}

\textbf{User:} Please refer to the above knowledge base and generate \textit{five} Verilog testbench cases. Do not implement the module, only generate the testbench.

\textit{[One-shot example with expected format]}

\textbf{Assistant:} ... 
\end{tcolorbox}
\caption{RAG prompt template for testbench generation in \tool-T.}
\vspace{-0.5cm}
\label{fig:rag_prompt}
\end{figure}

\subsubsection{RAG-Enhanced Generation}
At inference, we employ a two-stage retrieval-augmented generation strategy, as illustrated in Fig.~\ref{fig:rag_prompt}.

\textbf{Stage 1: Similarity Retrieval.}
Given requirement $x$ and module interface $h$, we use BM25 to retrieve top-$K$ similar examples from VeriTest-53K:
\begin{equation}
    \mathcal{R}_K = \underset{(q,t) \in \mathcal{K}}{\text{Top-}K}\left[\text{BM25}(x \oplus h, q)\right]
\end{equation}

\textbf{Stage 2: Context-Augmented Generation.}
Inject retrieved examples as in-context demonstrations:
\begin{equation}
    \mathcal{T} = \mathcal{G}_T(x, h \mid \mathcal{R}_K)
\end{equation}
where $\mathcal{G}_T$ reuses the \tool-R tuned 4B model (Section~\ref{subsec:eahc-r}) without additional fine-tuning, relying on RAG with $K{=}5$ retrieved examples for domain adaptation.

\subsubsection{Execution Anchoring}
As shown in Fig.~\ref{fig:method} (right, EAHC-T block), each candidate $\hat{y}_i$ is executed against the generated testbench $\mathcal{T}$ containing $m$ test cases.
The execution vector $\execvec_i \in \{-1, 0, 1\}^m$ records the outcome:
\begin{equation}
    e_{ij} = \begin{cases} 
        1 & \text{if } \hat{y}_i \text{ passes test } t_j \\ 
        0 & \text{if } \hat{y}_i \text{ fails test } t_j \\ 
        -1 & \text{if compilation fails}
    \end{cases}
\end{equation}

\yg{Candidates with identical execution vectors ($\execvec_i = \execvec_j$) form an \textit{execution-equivalence} cluster; since $\mathcal{T}$ is generated rather than exhaustive, a cluster may still mix implementations that differ only where the tests are silent, which is exactly where the reasoning channel is needed.}
The execution score for a cluster $\mathcal{C}_{\execvec}$ is computed as the test pass rate:
\begin{equation}
    \execscore(\mathcal{C}_{\execvec}) = \frac{\sum_{j=1}^{m} \mathbb{1}[e_j = 1]}{\sum_{j=1}^{m} \mathbb{1}[e_j \neq -1]}
\end{equation}

%==============================================================================
\subsection{Hierarchical Selection}
\label{subsec:selection}

The final stage combines execution and reasoning signals through hierarchical selection (see Fig.~\ref{fig:method}, right side).

\subsubsection{Equivalence Clustering}
Candidates are grouped into \yg{execution-equivalence} clusters based on their execution vectors:
\begin{equation}
    \mathcal{C}_{\execvec} = \{\hat{y}_i \in \mathcal{Y}_k \mid \execvec_i = \execvec\}
\end{equation}

As illustrated in Fig.~\ref{fig:method}, candidates with identical execution behavior (e.g., Verilog 1, 3 in cluster $\mathcal{C}_1$; Verilog 2, 7 in cluster $\mathcal{C}_2$) are grouped together.

\subsubsection{Fusion Scoring}
For each cluster $\mathcal{C}_{\execvec}$, we compute the hybrid score by fusing execution and reasoning signals:
\begin{equation}
    \hybridscore(\mathcal{C}_{\execvec}) = \alpha \cdot \execscore(\mathcal{C}_{\execvec}) + (1-\alpha) \cdot \reasonscore(\mathcal{C}_{\execvec})
\end{equation}

The reasoning score aggregates individual judgments using $\max$:
\begin{equation}
    \reasonscore(\mathcal{C}_{\execvec}) = \max_{\hat{y}_i \in \mathcal{C}_{\execvec}} F_\phi(x, \hat{y}_i)
\end{equation}
which represents the most optimistic reasoning estimate within the cluster.
\yg{Taking the maximum lets one confident judgment carry a cluster, which is in tension with the consistency that anchoring aims for.}
We set $\alpha = 0.6$ by default, giving slightly higher weight to the more reliable execution signal.

\subsubsection{Two-Level Selection}

\textbf{Level 1: Cluster Selection.}
Select the cluster with highest hybrid score:
\begin{equation}
    \mathcal{C}^* = \arg\max_{\mathcal{C}_{\execvec}} \hybridscore(\mathcal{C}_{\execvec})
\end{equation}
For tie-breaking, we apply the priority order: $\hybridscore > \execscore > \reasonscore > |\mathcal{C}|$.

\textbf{Level 2: Individual Selection.}
Within the optimal cluster $\mathcal{C}^*$, select the candidate with highest reasoning score:
\begin{equation}
    \hat{y}^* = \arg\max_{\hat{y}_i \in \mathcal{C}^*} F_\phi(x, \hat{y}_i)
\end{equation}

% \textbf{Design Rationale.}
% This two-level design provides complementary guarantees:
% \begin{itemize}[leftmargin=*,nosep]
%     \item \textit{Cluster-level}: The selected candidate belongs to a cluster with high test pass rate, ensuring consistent execution behavior.
%     \item \textit{Individual-level}: Among functionally equivalent candidates, the one with best semantic quality (as judged by reasoning) is chosen.
% \end{itemize}

The final output $\hat{y}^*$ thus balances execution reliability with reasoning-based quality assessment.
% 主实验表格
\begin{table*}[t]
\centering
\caption{RQ1: Effectiveness evaluation results on VerilogEval-v2 and ResBench. \yg{The best reranking result per column is \textbf{bolded}; Pass@1 and the Pass@10 oracle are listed for reference.}}
\label{tab:rq1_results}
\footnotesize
\resizebox{\textwidth}{!}{
\begin{tabular}{l|ccccccccc|c}
\toprule
\textbf{Method} & \textbf{GPT-5} & \textbf{DS-V3} & \textbf{GLM-4} & \textbf{QC} & \textbf{OC} & \textbf{SC} & \textbf{HaVen} & \textbf{VeriPref} & \textbf{CodeV} & \textbf{Avg.} \\
\midrule
\multicolumn{11}{c}{\cellcolor{gray!15}\textit{VerilogEval-v2}} \\
\midrule
Pass@1 & 85.90 & 73.08 & 76.28 & 33.33 & 31.41 & 47.44 & 40.38 & 41.67 & 56.41 & 53.99 \\
\cdashline{1-11}
CodeT-Self & 78.21 & 73.08 & 82.69 & 35.90 & 38.46 & 48.72 & 41.03 & 37.18 & 57.05 & 54.70 \\
CodeT-GPT & 78.21 & 72.44 & \textbf{85.90} & 39.10 & 39.10 & 52.56 & 49.36 & 53.21 & 62.82 & 59.19 \\
DiTing-1.5B & 82.69 & 74.36 & 82.05 & 35.26 & 36.54 & 48.08 & 42.31 & 46.79 & 55.13 & 55.91 \\
DiTing-7B & 83.33 & 76.28 & 82.69 & 36.54 & 39.74 & 51.28 & 46.79 & 50.64 & 58.97 & 58.47 \\
\cdashline{1-11}
EAHC-T ($k$=1) & 83.97 & 76.28 & 80.13 & 35.90 & 37.82 & 48.72 & 41.03 & 38.46 & 55.77 & 55.34 \\
EAHC-T ($k$=3) & 83.97 & 76.92 & 82.05 & 39.10 & 41.03 & 50.64 & 44.87 & 45.51 & 57.69 & 57.98 \\
EAHC-R ($n$=1) & 83.97 & 76.92 & 79.49 & 39.74 & 46.79 & 61.54 & 50.00 & 56.41 & 61.54 & 61.82 \\
EAHC-R ($n$=3) & 82.05 & 78.85 & 80.13 & 41.67 & 47.44 & 60.90 & 52.56 & 57.69 & 63.64 & 62.77 \\
\textbf{EAHC} & \textbf{83.97} & \textbf{81.41} & 81.41 & \textbf{46.15} & \textbf{50.00} & \textbf{62.18} & \textbf{53.85} & \textbf{60.26} & \textbf{66.67} & \textbf{65.10} \\
\cdashline{1-11}
Pass@10 (Oracle) & 92.31 & 85.26 & 92.95 & 53.21 & 56.41 & 67.31 & 58.33 & 66.67 & 69.23 & 71.30 \\
\midrule
\multicolumn{11}{c}{\cellcolor{gray!15}\textit{ResBench}} \\
\midrule
Pass@1 & 73.21 & 64.29 & 67.86 & 41.07 & 42.86 & 42.86 & 46.43 & 46.43 & 53.57 & 53.18 \\
\cdashline{1-11}
CodeT-Self & 73.21 & 73.21 & 76.79 & 44.64 & 42.86 & 50.00 & 53.57 & 53.57 & 60.71 & 58.73 \\
CodeT-GPT & 73.21 & \textbf{76.79} & \textbf{80.36} & 57.14 & 53.57 & 60.71 & 55.36 & 53.57 & 58.93 & 63.29 \\
DiTing-1.5B & 69.64 & 67.86 & 67.86 & 48.21 & 46.43 & 55.36 & 60.71 & 50.00 & 58.93 & 58.33 \\
DiTing-7B & 71.43 & 64.29 & 71.43 & 48.21 & 48.21 & 46.43 & 66.07 & 58.93 & 60.71 & 59.52 \\
\cdashline{1-11}
EAHC-T ($k$=1) & 73.21 & 67.86 & 67.86 & 46.43 & 46.43 & 46.43 & 58.93 & 53.57 & 57.14 & 57.54 \\
EAHC-T ($k$=3) & 75.00 & 66.07 & 71.43 & 58.93 & 55.36 & 58.93 & 64.29 & 57.14 & 55.36 & 62.50 \\
EAHC-R ($n$=1) & 71.43 & 73.21 & 69.64 & 55.36 & 55.36 & 57.14 & 64.29 & 57.14 & 60.71 & 62.70 \\
EAHC-R ($n$=3) & 71.43 & 66.07 & 76.79 & 51.79 & 53.57 & 55.36 & 69.64 & 60.71 & 62.50 & 63.10 \\
\textbf{EAHC} & \textbf{76.79} & 73.21 & 78.57 & \textbf{60.71} & \textbf{60.71} & \textbf{67.86} & \textbf{71.43} & \textbf{60.71} & \textbf{64.29} & \textbf{68.25} \\
\cdashline{1-11}
Pass@10 (Oracle) & 85.71 & 83.93 & 87.50 & 73.21 & 71.43 & 73.21 & 75.00 & 75.00 & 75.00 & 77.78 \\
\bottomrule
\end{tabular}
}
\end{table*}
	
\section{Evaluation}
\label{sec:eval}

We evaluate {\tool} through \yg{four} research questions:
\begin{itemize}[leftmargin=*]
    \item \textbf{RQ1 (Overall Effectiveness)}: How effective is {\tool} compared to existing code reranking methods on Verilog generation tasks?
    
    \textit{Motivation}: Our empirical study (Section~\ref{sec:empirical}) revealed that existing methods suffer from poor domain transferability and reasoning hallucination. We evaluate whether {\tool}'s dual-channel design effectively addresses these limitations.
    
    \item \textbf{RQ2 (Orthogonality)}: Is {\tool} orthogonal and complementary to training-based optimization methods (SFT and RL)?
    
    \textit{Motivation}: Training-based methods (SFT, RL) have been widely adopted to improve code generation models. Since these methods optimize the model's generation distribution while {\tool} optimizes candidate selection, we investigate whether the two optimization dimensions are orthogonal and can be combined for additional gains across different training stages.
    
    \item \textbf{RQ3 (Independent vs. Interactive Fusion)}: Does independent channel fusion outperform interactive fusion \yg{in practice}?
    
    \textit{Motivation}: \yg{Our analysis bounds what a reasoner gains from observing execution results, but leaves open how a concrete implementation behaves. We therefore compare independent fusion against the alternative where the reasoner sees execution feedback before judging.}

    \item \textbf{\yg{RQ4 (Validity of the Execution Anchor)}}: \yg{How reliable are the testbenches that {\tool} generates, and how does the framework behave when they carry no information?}

    \textit{Motivation}: \yg{Clustering and scoring both depend on self-generated testbenches, so the execution channel must be validated rather than assumed. We measure it against the hidden ground-truth testbenches and examine the problems where it fails to separate candidates.}
\end{itemize}

%==============================================================================
\subsection{RQ1: Overall Effectiveness}
\label{subsec:rq1}

\begin{figure*}[t]
	\centering
	\subfigure[Impact of $\alpha$ on VerilogEval-v2]{%
		\includegraphics[width=0.48\textwidth]{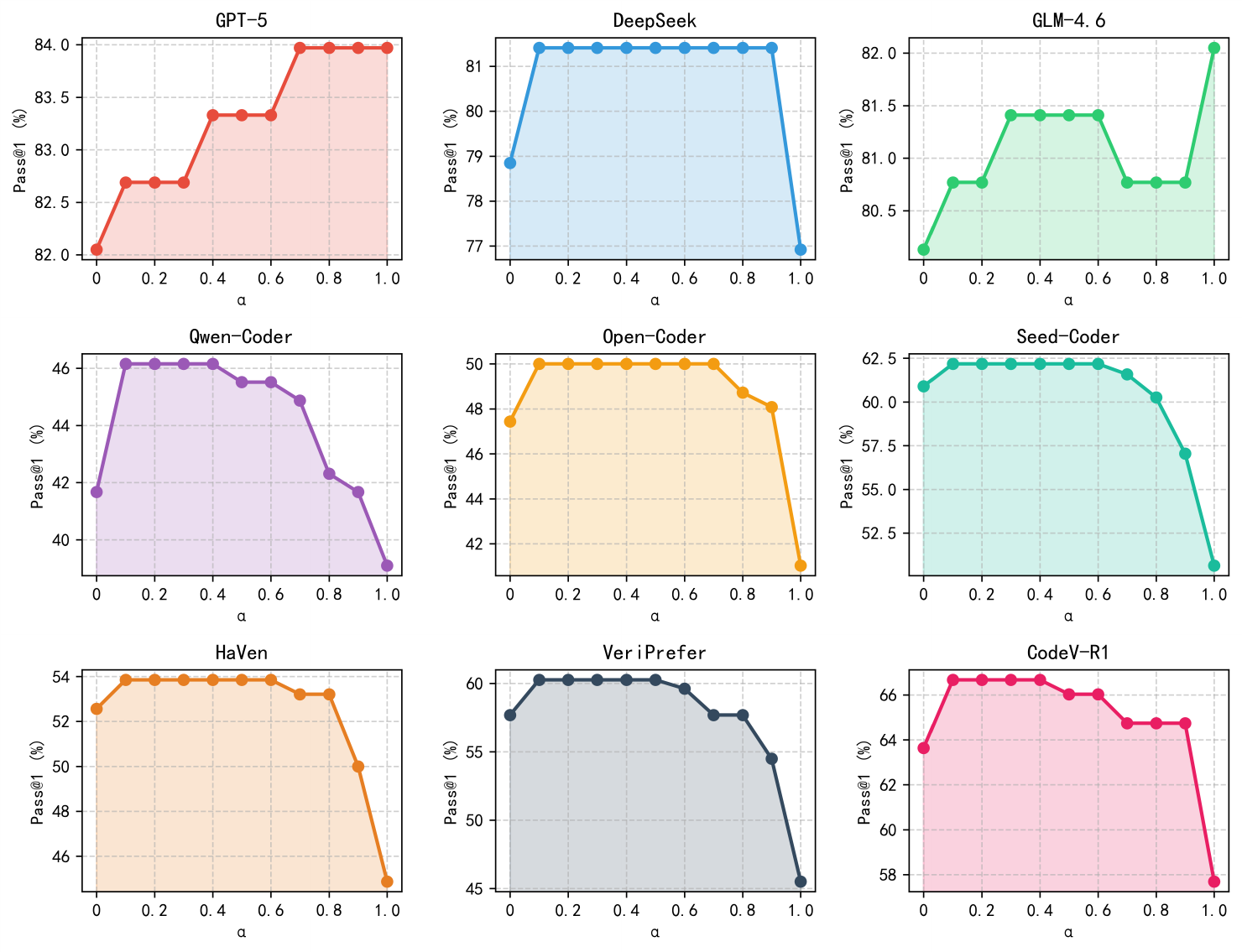}%
		\label{fig:alpha_verilog}%
	}%
	% \hfill
	\subfigure[Impact of $\alpha$ on ResBench]{%
		\includegraphics[width=0.48\textwidth]{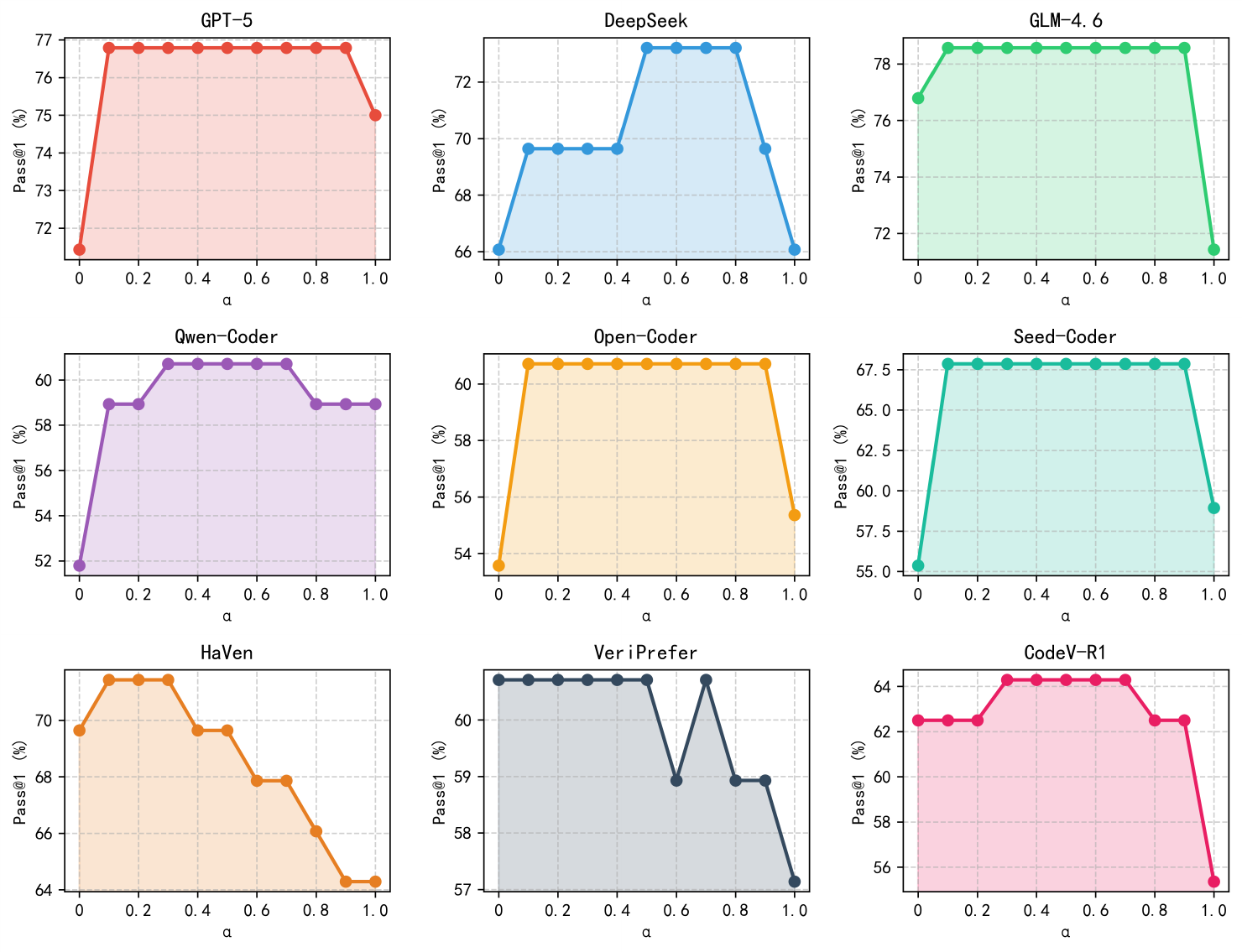}%
		\label{fig:alpha_resbench}%
	}%
	\caption{Sensitivity analysis of hyperparameter $\alpha$ across different LLMs.}
	\label{fig:alpha_analysis}
	\vspace{-0.5cm}
\end{figure*}
	
\subsubsection{Performance Comparison}

Table~\ref{tab:rq1_results} presents the experimental results on VerilogEval-v2 and ResBench.
\yg{{\tool} attains the highest average Pass@1 on both benchmarks and the best reranking accuracy in 15 of the 18 configurations.}

\textbf{Results on VerilogEval-v2.}
{\tool} achieves an average Pass@1 of \textbf{65.10\%}, outperforming the strongest baseline CodeT-GPT (59.19\%) by \textbf{+5.91\%} absolute improvement.
Compared to the original Pass@1 (53.99\%), {\tool} recovers \textbf{64.2\%} of the gap to the Pass@10 oracle (71.30\%).
The improvements are particularly pronounced on domain-specific models with weaker base performance: Open-Coder improves from 31.41\% to 50.00\% (\textbf{+18.59\%}), and Seed-Coder from 47.44\% to 62.18\% (\textbf{+14.74\%}).

\textbf{Results on ResBench.}
{\tool} achieves an average Pass@1 of \textbf{68.25\%}, surpassing CodeT-GPT (63.29\%) by \textbf{+4.96\%}.
This represents \textbf{61.3\%} recovery of the Pass@1-to-Pass@10 gap.
Notable improvements include Qwen-Coder (41.07\%$\rightarrow$60.71\%, \textbf{+19.64\%}) and HaVen (46.43\%$\rightarrow$71.43\%, \textbf{+25.00\%}), demonstrating \tool's effectiveness on challenging generation tasks.

\textbf{\yg{Analysis of Exceptions.}}
\yg{{\tool} is not the best reranker in three configurations, and CodeT-GPT wins all of them: GLM-4 on VerilogEval-v2 (81.41 vs.\ 85.90), and DS-V3 and GLM-4 on ResBench (73.21 vs.\ 76.79 and 78.57 vs.\ 80.36).
The three share a profile: a strong generator whose candidates mostly compile and pass, evaluated against CodeT-GPT's GPT-5-authored testbenches.
When almost every candidate is already correct, the ranking turns on the two error modes that remain.
The first lies in the execution channel, where a permissive testbench admits an incorrect implementation into a passing cluster: Section~\ref{subsec:rq4} reports false-accept rates of 12.8\% and 6.6\% and mixed-cluster rates of 23.8\% and 27.5\%.
On both quantities the GPT-5 testbenches are stronger on ResBench (5.9\% and 25.7\%), which accounts for the two exceptions there.
The second lies in the reasoning channel, where an incorrect candidate outranks every correct one on 5.7\% and 8.4\% of the problems that contain one (Table~\ref{tab:consistency}); anchoring reduces this drift without eliminating it, which explains the remaining exception on VerilogEval-v2.}

\textbf{Pairwise Significance Testing}
McNemar's test~\cite{mcnemar1947note} is appropriate for paired binary outcomes, comparing the number of problems where one method succeeds and the other fails.
Let $n_{01}$ denote problems where the baseline succeeds but {\tool} fails, and $n_{10}$ denote the reverse.
The test statistic is:
\begin{equation}
    \chi^2 = \frac{(|n_{01} - n_{10}| - 1)^2}{n_{01} + n_{10}}
\end{equation}

Compared to the best baseline CodeT-GPT, {\tool} shows highly significant improvements on both datasets. On VerilogEval-v2, the p-value across all models is $1.1\times10^{-10}$, and on ResBench, it is $1.8\times10^{-8}$. These results ($p<0.01$) indicate that \yg{the aggregate advantage over CodeT-GPT is unlikely to arise by chance, although, as noted above, CodeT-GPT remains ahead on three individual configurations}.

\subsubsection{Ablation Study}

We conduct ablation studies to understand the contribution of each component, as shown in Table~\ref{tab:rq1_results}.

\textbf{Execution Channel (\tool-T).}
Using only execution-based testcase scoring (\tool-T, $k$=3) achieves 57.98\% on VerilogEval-v2 and 62.50\% on ResBench.
The performance gap to full {\tool} (7.12\% and 5.75\%) confirms that reasoning signals provide substantial complementary value beyond execution feedback.

\textbf{Reasoning Channel (\tool-R).}
Using only reasoning-based hierarchical ranking (\tool-R, $n$=3) achieves 62.77\% on VerilogEval-v2 and 63.10\% on ResBench.
The performance degradation (2.33\% and 5.15\% vs. full \tool) demonstrates that execution anchoring provides crucial grounding for reasoning-based selection.

\textbf{Hierarchical Selection.}
Comparing \tool-R ($n$=1) with \tool-R ($n$=3), hierarchical selection improves performance from 61.82\% to 62.77\% on VerilogEval-v2 and from 62.70\% to 63.10\% on ResBench.
This validates our cluster-based design for mitigating reasoning hallucination through diversity preservation.

\textbf{Fusion Weight $\alpha$.}
Figure~\ref{fig:alpha_analysis} illustrates the sensitivity of {\tool} to the fusion weight $\alpha \in [0, 1]$.
Performance consistently peaks in the range $\alpha \in [0.1, 0.6]$, with optimal values around $\alpha = 0.3$--$0.6$.
Both extreme configurations ($\alpha=0$: reasoning only; $\alpha=1$: execution only) yield suboptimal results, with performance degrading sharply when $\alpha > 0.8$.
This confirms that the dual-channel fusion effectively leverages complementary information from both sources.

\begin{table}[htbp]
\centering
\caption{\yg{Consistency of {\tool}-R ($n$=3) on execution-equivalent candidates, aggregated over the nine generators. A pair is two candidates of one problem that pass all ground-truth tests; only pairs with fully parseable judgments are counted.}}
\label{tab:consistency}
\resizebox{\columnwidth}{!}{
\begin{tabular}{l|cc}
\toprule
\textbf{Measure} & \textbf{VerilogEval-v2} & \textbf{ResBench} \\
\midrule
Problems with $\geq$2 equivalent candidates (\%) & 65.2 & 71.6 \\
Equivalent pairs & 30,502 & 10,040 \\
\midrule
Opposite verdicts (\%) & 4.8 & 7.1 \\
Unequal scores (\%) & 15.3 & 27.4 \\
Identical code, unequal scores (\%) & 5.2 & 16.4 \\
Wrong candidate outranks all correct (\%) & 5.7 & 8.4 \\
\bottomrule
\end{tabular}
}
\end{table}

\subsubsection{\yg{The Inconsistency that Execution Anchoring Targets}}
\label{subsubsec:consistency}

\yg{\hierselect rests on the premise that {\tool}-R scores drift across candidates with identical behavior. We verify this on candidates of the same problem that pass every ground-truth test: they are execution-equivalent, so a consistent judge should score them identically. Table~\ref{tab:consistency} shows otherwise.}

\yg{Scores disagree in 15.3\% and 27.4\% of equivalent pairs, and since ranking depends on scores rather than binary verdicts, these are the operative rates. Majority voting ($n$=3) reduces within-candidate variance, lowering verdict flips from 7.5\% to 4.8\% and from 12.8\% to 7.1\%, but cross-candidate disagreement persists. Even byte-identical code receives different scores in 5.2\% and 16.4\% of pairs, where no semantic ambiguity can explain the gap. The cost is direct: on 5.7\% and 8.4\% of problems that contain a correct candidate, an incorrect one ranks highest, a loss voting cannot fix. Execution anchoring eliminates this drift by assigning one score per cluster, yielding +2.33\% and +5.15\% over {\tool}-R. Inconsistency also rises with generator strength (3.8\% on Seed-Coder vs.\ 7.5\% on GPT-5 and GLM-4), consistent with the narrowing margins in RQ1.}

\begin{tcolorbox}[colback=SeaGreen!10!CornflowerBlue!10,colframe=RoyalPurple!55!Aquamarine!100!,title=Summary of RQ1]
	{\tool} achieves the highest average Pass@1 on both benchmarks (65.10\% and 68.25\%, i.e.\ +5.91\% and +4.96\% over the best baseline) and ranks first in \yg{15 of 18 configurations, with CodeT-GPT still ahead on three strong-generator configurations}. Both execution and reasoning channels contribute positively, and \yg{their fusion performs best overall. Even after majority voting, reasoning scores disagree on 4.8\%--7.1\% of execution-equivalent pairs, the inconsistency execution anchoring absorbs}.
\end{tcolorbox}
%==============================================================================
\subsection{RQ2: Orthogonality}
\label{subsec:rq2}

\subsubsection{Motivation}
Training-based methods, including supervised fine-tuning (SFT) and reinforcement learning (RL), have been widely adopted to improve code generation models.
A natural question arises: \textit{Is {\tool} complementary to training-based approaches, or do they overlap in their improvements?}
We investigate whether {\tool} can provide consistent gains across different stages of the model training pipeline.

\subsubsection{Experimental Design}
We select the CodeV model family as our subject, which follows a typical three-stage training pipeline for domain-specific code generation:
\begin{itemize}[leftmargin=*,nosep]
    \item \textbf{Base Model (Qwen2.5-Coder)}: A general-purpose code LLM pre-trained on multi-language code corpora, serving as the foundation without any Verilog-specific optimization.
    \item \textbf{SFT Model (CodeV-R1)}: Built upon Qwen2.5-Coder, this model is supervised fine-tuned on curated Verilog code generation datasets to acquire domain-specific knowledge and coding patterns.
    \item \textbf{RL Model (CodeV-R1-RL)}: Starting from CodeV-R1, this model is further optimized using the DAPO algorithm~\cite{yu2025dapo} with execution feedback as reward signals.
\end{itemize}

This progression (Base $\rightarrow$ SFT $\rightarrow$ RL) represents a common paradigm in building domain-specific code generation models, where each stage incrementally improves the model's generation capability.
For each model variant, we evaluate Pass@1, Pass@1 with {\tool}-T, Pass@1 with {\tool}-R, Pass@1 with full {\tool}, and Pass@10 (Oracle upper bound) to analyze whether \tool's improvements are consistent across different training stages.

\begin{figure}[t]
	\centering
	\subfigure[Results on VerilogEval-v2]{%
		\includegraphics[width=0.48\textwidth]{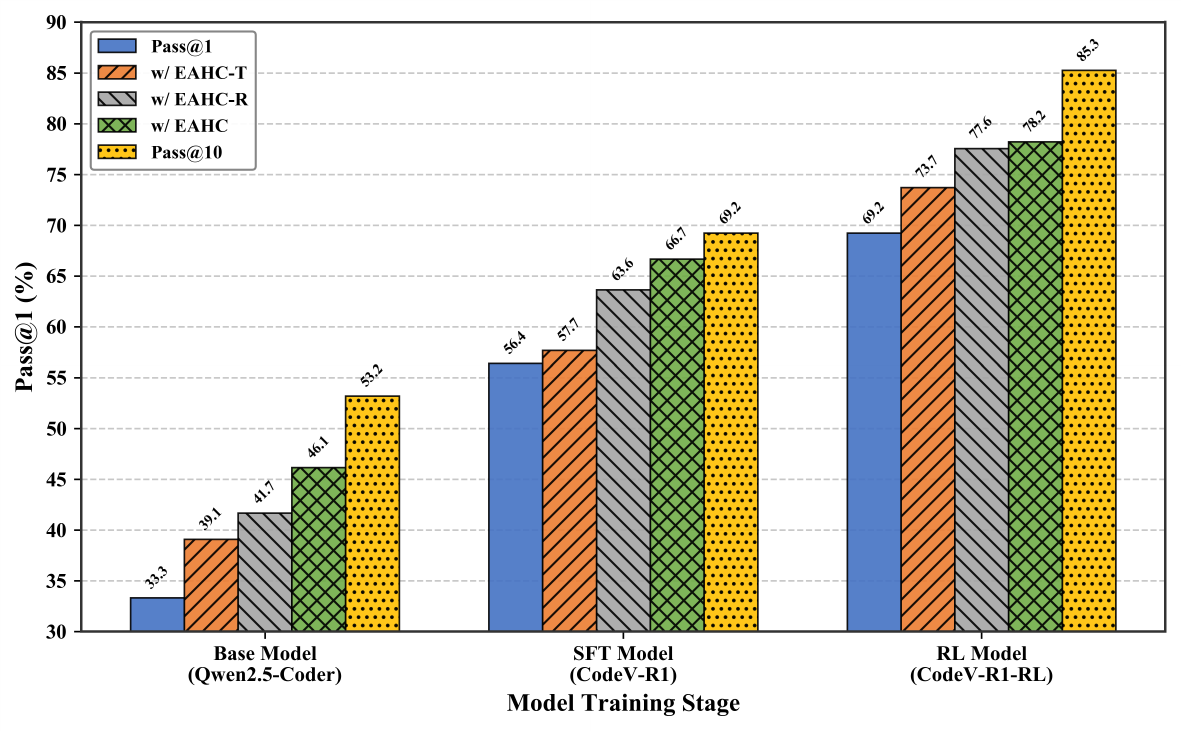}%
		\label{fig:rq2_verilog}%
	}%
	\hfill
	\subfigure[Results on ResBench]{%
		\includegraphics[width=0.48\textwidth]{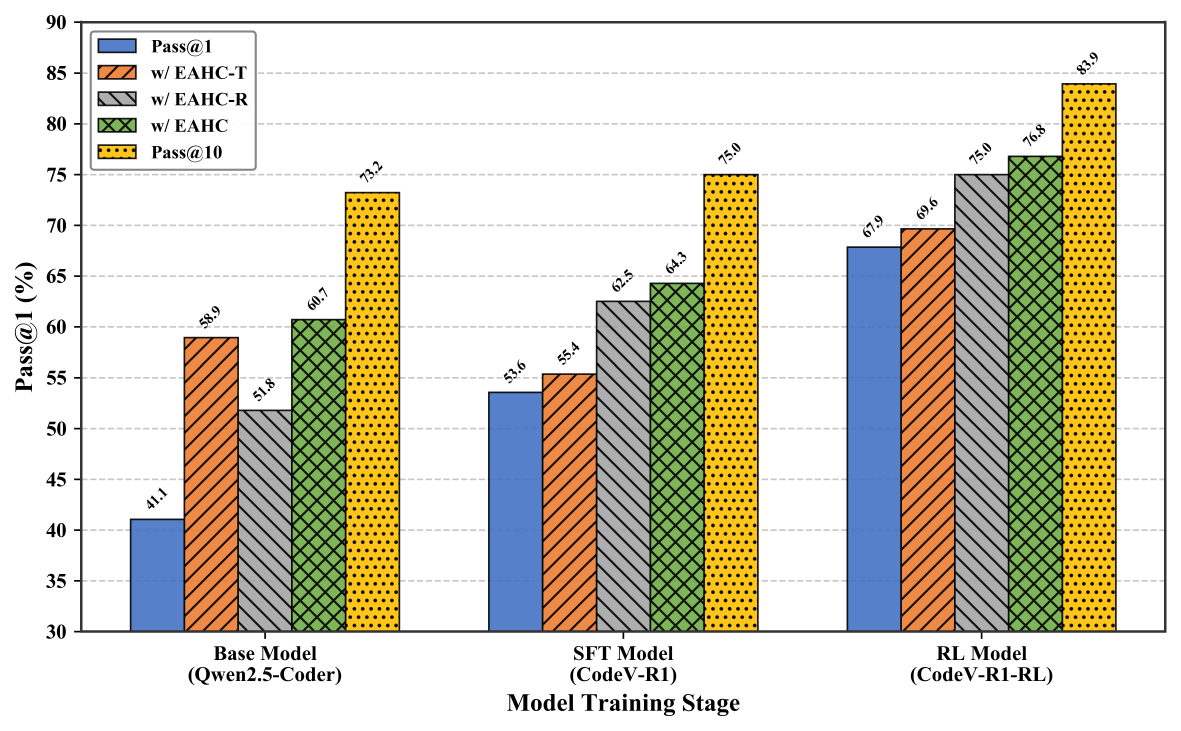}%
		\label{fig:rq2_resbench}%
	}%
	\caption{Orthogonality with Training-based Methods.}
	\label{fig:rq2_analysis}
	\vspace{-0.5cm}
\end{figure}

\subsubsection{Results Analysis}

Figure~\ref{fig:rq2_analysis} presents the orthogonality analysis across three training stages.

\noindent\textbf{Consistent Improvements Across All Training Stages.}
{\tool} delivers substantial improvements regardless of the underlying model's training stage.
On VerilogEval-v2, {\tool} improves Pass@1 by \textbf{+12.82\%} for Base (33.33\%$\rightarrow$46.15\%), \textbf{+10.26\%} for SFT (56.41\%$\rightarrow$66.67\%), and \textbf{+8.98\%} for RL (69.23\%$\rightarrow$78.21\%).
On ResBench, the corresponding improvements are \textbf{+19.64\%}, \textbf{+10.72\%}, and \textbf{+8.93\%}.
These consistent gains confirm that generation optimization (via training) and selection optimization (via {\tool}) operate on orthogonal dimensions.

\noindent\textbf{Complementary Effects.}
Training and reranking address fundamentally different aspects of code generation.
Training (SFT$\rightarrow$RL) progressively improves base Pass@1 from 33.33\% to 69.23\% on VerilogEval-v2 by enhancing candidate quality.
Meanwhile, {\tool} recovers a significant portion of the Pass@1-to-Pass@10 gap at each stage: 64.5\% (Base), 80.0\% (SFT), and 56.0\% (RL).
Notably, even the RL-optimized model, which already incorporates execution feedback during training, still benefits substantially from {\tool}'s selection strategy, indicating that generation-time optimization and inference-time selection capture complementary signals.

% \noindent\textbf{Component Contribution.}
% Both channels contribute positively across all training stages.
% Interestingly, \tool-R exhibits stronger improvements than \tool-T on SFT and RL models (e.g., +7.23\% vs. +1.28\% for SFT on VerilogEval-v2).
% This suggests that reasoning-based evaluation becomes increasingly effective as model quality improves, since higher-quality candidates exhibit more nuanced differences that benefit from semantic analysis.

% \noindent\textbf{Practical Implications.}
% These results establish \tool as a model-agnostic enhancement that can be readily deployed on any code generation model, including pre-trained, fine-tuned, and RL-optimized variants, without additional training overhead.
% The orthogonality between training and reranking enables practitioners to combine both strategies for maximum effectiveness.

\begin{tcolorbox}[colback=SeaGreen!10!CornflowerBlue!10,colframe=RoyalPurple!55!Aquamarine!100!,title=Summary of RQ2]
\yg{Within this model family, {\tool} adds +8.93\%--+19.64\% at every training stage, indicating that generation optimization and selection optimization address different aspects of code generation and can be combined. Whether the same holds for other training pipelines remains to be tested.}
\end{tcolorbox}

%==============================================================================
%==============================================================================
\subsection{RQ3: Independent vs. Interactive Fusion}
\label{subsec:rq3}

\subsubsection{Motivation}
An intuitive alternative to our design is \textit{interactive fusion}, where the reasoner observes execution results before judging.
\yg{Theorem~\ref{thm:independent} bounds what such interaction can gain when it amounts to revising an acquired judgment, but the bound admits equality and does not predict accuracy loss.
Whether interactive fusion actually hurts is therefore empirical;} we compare the two strategies across all nine models.

\subsubsection{Experimental Design}
We compare two fusion strategies under identical experimental settings:
\begin{itemize}[leftmargin=*,nosep]
    \item \textbf{Interactive Fusion}: \tool-R receives execution results (pass/fail status, error messages) as additional context before making judgments. The reasoning and execution signals are fused \textit{interactively} within the reasoner.
    \item \textbf{Independent Fusion}: \tool-R judges correctness without observing any execution information. Signals are fused \textit{independently} at the decision stage via $\hybridscore = \alpha \cdot \execscore + (1-\alpha) \cdot \reasonscore$.
\end{itemize}

\yg{Both use the same {\tool}-R checkpoint and differ only in the prompt; the comparison thus isolates prompt-level interaction rather than end-to-end training with execution context.}

% 融合策略对比表格
\begin{table*}[t]
\centering
\caption{Independent fusion vs. interactive fusion across all models. Best results per column are \textbf{bolded}.}
\label{tab:fusion_comparison}
\footnotesize
% \resizebox{\textwidth}{!}{
\begin{tabular}{l|ccccccccc|c}
\toprule
\textbf{Strategy} & \textbf{GPT-5} & \textbf{DeepSeek-V3} & \textbf{GLM-4} & \textbf{Qwen-Coder} & \textbf{Open-Coder} & \textbf{Seed-Coder} & \textbf{HaVen} & \textbf{VeriPrefer} & \textbf{CodeV} & \textbf{Avg.} \\
\midrule
\multicolumn{11}{c}{\cellcolor{gray!15}\textit{VerilogEval-v2}} \\
\midrule
Interactive Fusion & 82.69 & 78.21 & 79.49 & 41.67 & 44.87 & 56.41 & 48.08 & 52.56 & 61.54 & 60.61 \\
Independent Fusion & \textbf{83.97} & \textbf{81.41} & \textbf{81.41} & \textbf{46.15} & \textbf{50.00} & \textbf{62.18} & \textbf{53.85} & \textbf{60.26} & \textbf{66.67} & \textbf{65.10} \\
\cdashline{1-11}
$\Delta$ & +1.28 & +3.20 & +1.92 & +4.48 & +5.13 & +5.77 & +5.77 & +7.70 & +5.13 & \textbf{+4.49} \\
\midrule
\multicolumn{11}{c}{\cellcolor{gray!15}\textit{ResBench}} \\
\midrule
Interactive Fusion & 75.00 & 69.64 & 75.00 & 55.36 & 55.36 & 62.50 & 66.07 & 55.36 & 58.93 & 63.69 \\
Independent Fusion & \textbf{76.79} & \textbf{73.21} & \textbf{78.57} & \textbf{60.71} & \textbf{60.71} & \textbf{67.86} & \textbf{71.43} & \textbf{60.71} & \textbf{64.29} & \textbf{68.25} \\
\cdashline{1-11}
$\Delta$ & +1.79 & +3.57 & +3.57 & +5.35 & +5.35 & +5.36 & +5.36 & +5.35 & +5.36 & \textbf{+4.56} \\
\bottomrule
\end{tabular}
% }
\end{table*}

\subsubsection{Results}
Table~\ref{tab:fusion_comparison} presents the comparison.
The results \yg{support our design choice}: \textbf{independent fusion outperforms interactive fusion on every model and benchmark}, with average gains of \textbf{+4.49\%} on VerilogEval-v2 and \textbf{+4.56\%} on ResBench.

\textbf{Performance Degradation Pattern.}
\yg{The gap is smallest for the strongest models (GPT-5 +1.28, GLM-4 +1.92 on VerilogEval-v2) and largest for the weaker ones (up to +7.70 for VeriPrefer).
Interactive fusion thus appears most harmful when reasoning must discriminate among lower-quality candidates.}

\textbf{Anchoring Bias Analysis.}
Qualitative inspection of \tool-R's reasoning chains reveals pronounced \textit{anchoring bias} under interactive fusion:
\begin{itemize}[leftmargin=*,nosep]
    \item \textbf{Pass Anchoring}: When all tests pass, the reasoner tends to approve the candidate even when subtle logical errors exist that the testbench fails to cover.
    \item \textbf{Fail Anchoring}: When tests fail, the reasoner over-penalizes the candidate even when the failure stems from testbench limitations rather than code errors.
\end{itemize}

\yg{Anchoring is one instance of the post-hoc revision that Theorem~\ref{thm:independent} covers: the reasoning channel drifts toward echoing execution outcomes instead of contributing independently.
The accuracy loss is consistent with the lossy end of that bound rather than with equality; whether a judge trained to distrust unreliable feedback could approach equality remains open.}

\textbf{Comparison with Execution-Only.}
Notably, interactive fusion (60.61\% / 63.69\%) performs only marginally better than execution-only \tool-T (57.98\% / 62.50\%), suggesting that the reasoning signal's contribution is largely ``absorbed'' by execution information.
In contrast, independent fusion \yg{retains far more of it}, achieving 65.10\% / 68.25\%.

\begin{tcolorbox}[colback=SeaGreen!10!CornflowerBlue!10,colframe=RoyalPurple!55!Aquamarine!100!,title=Summary of RQ3]
Independent fusion outperforms interactive fusion by \textbf{+4.49\%} on VerilogEval-v2 and \textbf{+4.56\%} on ResBench, \yg{confirming the advantage of independent acquisition under prompt-level interaction}. \yg{Exposing execution feedback} induces anchoring bias that reduces reasoning's independent contribution, causing it to degenerate toward execution-only performance.
\end{tcolorbox}

%==============================================================================
%==============================================================================
\subsection{\yg{RQ4: Validity of the Execution Anchor}}
\label{subsec:rq4}

\subsubsection{\yg{Motivation}}
\yg{Clustering and fusion rest on testbenches that {\tool} writes for itself, so the execution channel must be validated rather than trusted as an oracle.
We ask how reliable it is and how the framework behaves where it is not.}

\subsubsection{\yg{Experimental Design}}
\yg{Each benchmark hides a ground-truth testbench used only for measurement, never exposed to the reranker.
Running it on the ten candidates per problem gives a reference label; running the generated testbench gives the signal {\tool}-T actually consumes.
Comparing the two yields a \textit{false-accept} rate (passed by the generated testbench but rejected by the reference), a \textit{false-reject} rate (the converse), the fraction of clusters mixing correct and incorrect candidates, and the AUC of the execution score against the reference label.
We also measure the GPT-5 testbenches used by CodeT-GPT as a non-retrieval reference point.}

\yg{We evaluate by fault detection rather than structural coverage: coverage measures how much of a \textit{single} implementation is exercised, whereas the anchor must separate \textit{several} implementations of one specification.
A testbench with loose expected values can achieve full coverage on a correct module yet return the same verdict for all candidates.
False acceptance is a fault detection rate over the incorrect implementations the nine generators actually produced.}

\begin{table}[htbp]
\centering
\caption{\yg{Quality of the execution anchor against the hidden ground-truth testbenches, aggregated over the nine generators. Lower is better for F-accept, F-reject, and Mixed.}}
\label{tab:anchor}
\resizebox{\columnwidth}{!}{
\begin{tabular}{l|ccccc}
\toprule
\textbf{Testbench source} & \textbf{Compiles} & \textbf{F-accept} & \textbf{F-reject} & \textbf{Mixed} & \textbf{AUC} \\
\midrule
\multicolumn{6}{c}{\yg{\textit{VerilogEval-v2}}} \\
\midrule
\yg{GPT-5 (CodeT-GPT)} & \yg{92.9} & \yg{15.4} & \yg{40.0} & \yg{20.2} & \yg{0.768} \\
\yg{{\tool}-T ($k$=1)} & \yg{38.4} & \yg{8.5} & \yg{69.0} & \yg{30.3} & \yg{0.618} \\
\yg{{\tool}-T ($k$=5)} & \yg{82.3} & \yg{12.8} & \yg{47.3} & \yg{23.8} & \yg{0.743} \\
\midrule
\multicolumn{6}{c}{\yg{\textit{ResBench}}} \\
\midrule
\yg{GPT-5 (CodeT-GPT)} & \yg{58.9} & \yg{5.9} & \yg{59.2} & \yg{25.7} & \yg{0.702} \\
\yg{{\tool}-T ($k$=1)} & \yg{64.3} & \yg{8.0} & \yg{73.7} & \yg{39.4} & \yg{0.571} \\
\yg{{\tool}-T ($k$=5)} & \yg{75.0} & \yg{6.6} & \yg{49.6} & \yg{27.5} & \yg{0.727} \\
\bottomrule
\end{tabular}
}
\end{table}

\subsubsection{\yg{Results}}

\textbf{\yg{The anchor is informative but conservative.}}
\yg{At $k$=5, false acceptance is 12.8\% and 6.6\% while false rejection reaches 47.3\% and 49.6\% (Table~\ref{tab:anchor}): the anchor rarely certifies an incorrect candidate but often fails to certify a correct one (AUC 0.743 and 0.727).
This asymmetry is why {\tool} fuses the score rather than filters on it, since discarding every rejected candidate would remove about half of the correct ones.
Roughly a quarter of the clusters still mix correct and incorrect candidates, which reasoning must separate.}

\textbf{\yg{Retrieval depth and testbench source.}}
\yg{At $k$=1 only 38.4\% of testbenches compile on VerilogEval-v2 and AUC falls to 0.618, which is why the main experiments use $k$=5.
Against the GPT-5 testbenches, retrieval improves every measure on ResBench but not on VerilogEval-v2, where GPT-5 compiles more often and separates candidates slightly better.
Anchor quality thus varies with the source, and the fusion weight $\alpha$ absorbs this variance.}

\textbf{\yg{Behavior under an uninformative anchor.}}
\yg{We split the 1,872 VerilogEval-v2 instances by whether the generated testbench separates the ten candidates.
On the 1,117 instances where it returns one verdict for all, {\tool} reaches 79.43\% vs.\ 78.51\% for {\tool}-R and 74.92\% for {\tool}-T: with no execution signal, the ranking follows reasoning rather than collapsing.
On the remaining 755 it reaches 43.74\% vs.\ 40.31\% and 33.79\%, so the anchor contributes where it carries information.
The first group holds the easier problems (oracle 82.81\% vs.\ 50.07\%), reflecting the absence of anything to separate rather than a testbench limitation alone.}

\begin{tcolorbox}[colback=SeaGreen!10!CornflowerBlue!10,colframe=RoyalPurple!55!Aquamarine!100!,title=Summary of RQ4]
\yg{The anchor false-accepts 6.6\%--12.8\% but false-rejects 47.3\%--49.6\%, favoring fusion over filtering. When it separates no candidates, {\tool} tracks reasoning (79.43\% vs.\ 78.51\%) instead of degrading; when it does separate, it gains most (43.74\% vs.\ 40.31\%).}
\end{tcolorbox}
\section{Discussion}
\label{sec:discuss}

\subsection{Design Choices and Channel Independence}
\label{subsec:design_choices}
A core design principle of {\tool} is the independent acquisition of execution and reasoning signals. 
To achieve this, we employ an asymmetric architecture: LoRA fine-tuning for the reasoning channel ({\tool}-R) and Retrieval-Augmented Generation (RAG) for the execution channel ({\tool}-T).

\textbf{Justification for Asymmetric Design.}
\yg{Table~\ref{tab:ablation_tuning} shows that both channels need the tuned backbone, but they acquire domain knowledge differently.}
For {\tool}-R, LoRA fine-tuning on VeriJudge-47K is highly effective for internalizing judgment capabilities, whereas adding RAG to an untuned model falls short. 
\yg{{\tool}-T reuses that same judgment-tuned adapter and obtains its testbench knowledge from retrieval rather than from weights: an untuned model compiles poorly even with RAG, yet training on VeriTest-53K without retrieval still trails retrieving from it (60.12 vs.\ 62.50 on ResBench). Test structures and assertion formats are thus better supplied as in-context examples than absorbed into parameters, which is what makes the architecture asymmetric.}

\begin{table}[htbp]
\centering
\caption{Ablation on fine-tuning and RAG configurations for {\tool}-R and {\tool}-T. Best results are \textbf{bolded}.}
\label{tab:ablation_tuning}
\resizebox{\columnwidth}{!}{
\begin{tabular}{l|cc}
\toprule
\textbf{Configuration} & \textbf{VerilogEval-v2} & \textbf{ResBench} \\
\midrule
\multicolumn{3}{c}{\cellcolor{gray!15}\textit{Reasoning Channel ({\tool}-R)}} \\
\midrule
Untuned Base Model & 56.12 & 57.13 \\
Untuned + RAG & 59.34 & 60.07 \\
\textbf{Tuned on VeriJudge-47K (Current)} & \textbf{62.77} & \textbf{63.10} \\
\midrule
\multicolumn{3}{c}{\cellcolor{gray!15}\textit{Execution Channel ({\tool}-T)}} \\
\midrule
Untuned + RAG & 55.77 & 58.33 \\
Tuned on VeriJudge-47K (No RAG) & 55.34 & 57.54 \\
Tuned on VeriJudge \& VeriTest (No RAG) & 56.84 & 60.12 \\
\textbf{Tuned on VeriJudge-47K + RAG (Current)} & \textbf{57.98} & \textbf{62.50} \\
\bottomrule
\end{tabular}
}
\end{table}

\textbf{Preserving Channel Independence.}
\yg{A natural concern is whether sharing one base model (and LoRA adapter) introduces coupling.
At inference, {\tool}-R sees only the requirement and candidate code while {\tool}-T sees only the requirement and retrieved examples, so no execution outcome reaches the reasoner.
A shared backbone does not rule out statistical dependence, however: both channels can inherit the same misconception and fail together.
We count this among the sources of correlated error discussed in Section~\ref{subsec:theory_bounds}.}

\subsection{Theoretical Assumptions and Boundaries}
\label{subsec:theory_bounds}
\yg{The analysis in Section~\ref{sec:theory} explains why independent acquisition is a sound default; it does not show that independent fusion beats every interactive design. We discuss below where its assumptions can break.}

\yg{\textbf{Conditional Independence and Its Violation.}
Proposition~\ref{thm:complementarity} uses $P(\mathbf{E}, R \mid Y) \approx P(\mathbf{E} \mid Y) \cdot P(R \mid Y)$.
Since both channels read the same requirement $x$, an under-constrained specification can make them fail together: the \testbench generator verifies one misreading while the reasoner endorses it.
Our ablations show that substantial non-redundant information survives on these benchmarks, but they do not establish exact independence.}

\yg{We measure the violation directly.
On instances where at least one candidate is correct and one is not, both channels select an incorrect candidate on 16.1\% of them, 1.71$\times$ the 9.4\% independence predicts (Table~\ref{tab:joint_failure}); the ratio is 1.73 on ResBench.
Dependence tracks design style: sequential logic roughly doubles per-channel error and more than doubles joint failure, with state machines the worst case, because such specifications commonly leave reset polarity or unenumerated-state behavior implicit.
Joint failure is also inflated by problems that are simply hard for both channels, so these rates bound correlated error rather than isolate a shared misreading; multi-clock designs, absent from both benchmarks, remain an unquantified risk.}

\begin{table}[htbp]
\centering
\caption{\yg{Failure of each channel on VerilogEval-v2, over the 533 instances containing both a correct and an incorrect candidate. Under independence $P(\text{both})$ would equal $P(T)P(R)$, which is 0.094 overall.}}
\label{tab:joint_failure}
\resizebox{\columnwidth}{!}{
\begin{tabular}{l|cccc}
\toprule
\textbf{Stratum} & \textbf{$n$} & \textbf{$P(T)$} & \textbf{$P(R)$} & \textbf{$P$(both)} \\
\midrule
\yg{All} & \yg{533} & \yg{0.381} & \yg{0.248} & \yg{0.161} \\
\yg{Combinational} & \yg{254} & \yg{0.244} & \yg{0.185} & \yg{0.094} \\
\yg{Sequential} & \yg{279} & \yg{0.505} & \yg{0.305} & \yg{0.222} \\
\yg{Finite-state machine} & \yg{155} & \yg{0.484} & \yg{0.374} & \yg{0.258} \\
\bottomrule
\end{tabular}
}
\end{table}

\textbf{Markov Assumption and Theorem Scope.}
\yg{Theorem~\ref{thm:independent} applies the DPI to post-hoc interaction $R' = h(R, \mathbf{E})$, which presumes a Markov chain.
With parameters fixed, the chain holds; it would break if the judge had memorized the reference implementation, since $R'$ could then carry information about $Y$ that neither $R$ nor $\mathbf{E}$ supplies, motivating our contamination screening (Section~\ref{subsec:contamination}).
The scope is also narrower than interaction in general: it says nothing about a judge that retains its independent assessment while exploiting $\mathbf{E}$ separately. Our preference for independent acquisition therefore rests on the RQ3 measurements rather than a general optimality claim.}

\textbf{Faithfulness of Mutual Information.}
\yg{$I(Y; R)$ quantifies verdict utility: a correct Yes/No may follow a flawed chain, so the measure reflects the verdict's value rather than the soundness of the reasoning.
Compiler-in-the-loop filtering and majority voting reduce verdict noise but neither inspects the chain. Our claims are correspondingly about verdict-level information.}

\yg{\textbf{Does the Training Filter Collapse Reasoning into Execution?}
VeriJudge-47K retains only candidates whose teacher verdict agrees with execution (Section~\ref{subsec:eahc-r}).
The two targets differ: the filter uses \emph{ground-truth} testbenches, so the retained label tracks $Y$, whereas $\mathbf{E}$ at inference comes from a \emph{generated} five-case testbench with coverage gaps (Section~\ref{sec:empirical}).
Training toward $Y$ thus does not reduce to imitating $\mathbf{E}$; accordingly, {\tool}-R alone outperforms {\tool}-T alone (62.77 vs.\ 57.98 on VerilogEval-v2), which an approximation of $\mathbf{E}$ could not achieve.
The filter does cost coverage of hard instances: candidates that every teacher misjudged are discarded, leaving the judge weakest where judgment is hardest.}

\yg{\textbf{Cluster Reasoning Aggregation.}
Each cluster inherits the $\max$ of its members' reasoning scores.
Because $\max$ propagates the most confident judgment, one hallucinated \texttt{Yes} can promote a cluster on its own.
Table~\ref{tab:agg_ablation} compares $\max$ against mean and median: $\max$ still ranks first, by under one point.
Averaging suppresses an isolated hallucination but also penalizes a correct cluster with uneven endorsements; the second effect dominates here.
The aggregator choice is in any case second-order: clustering itself recovers +2.33\% and +5.15\% over {\tool}-R (Section~\ref{subsubsec:consistency}), an order of magnitude more than separates $\max$ from median.}

\begin{table}[htbp]
\centering
\caption{\yg{Cluster reasoning aggregation ablation. Average Pass@1 over the nine generators.}}
\label{tab:agg_ablation}
\begin{tabular}{l|cc}
\toprule
\textbf{Aggregation} & \textbf{VerilogEval-v2} & \textbf{ResBench} \\
\midrule
\yg{$\max$ (default)} & \yg{\textbf{65.10}} & \yg{\textbf{68.25}} \\
\yg{Mean} & \yg{64.60} & \yg{67.66} \\
\yg{Median} & \yg{64.32} & \yg{67.26} \\
\bottomrule
\end{tabular}
\end{table}

\subsection{Broader Implications and HDL Specificity}
\label{subsec:implications}

\noindent\textbf{Importance to Software Engineering.}
Verilog defects propagate into physical fabrication, \yg{making them far costlier to fix than software defects, which can still be patched after release}. 
As LLM-based Verilog generation gains industrial adoption without principled quality assurance, the unique challenges of HDLs (e.g., parallel semantics, timing) present critical opportunities for SE techniques like test generation and reranking~\cite{chen2023essence,fang2025lintllm,chen2026qihe}.

\noindent\textbf{Are the Limitations Exclusive to HDLs?}
While poor domain transferability and reasoning hallucination exist in general-purpose languages, they are \yg{\textit{markedly} more severe} in Verilog. 
For example, execution-based methods yield \yg{substantial} gains on Python~\cite{chen2022codet} but struggle on Verilog due to LLMs' inability to generate high-quality hardware testbenches. 
Combined with data scarcity and complex semantics, these exacerbated challenges \yg{motivate} targeted frameworks like {\tool}.

\subsection{\yg{Contamination and Diversity Analysis}}
\label{subsec:contamination}
\yg{Both benchmarks release only a natural-language prompt, a module header, and a hidden testbench; neither publishes a reference implementation.
Overlap can therefore be measured only on the problem side, so we represent each benchmark item by its prompt and header, which is also the query {\tool}-T issues to the retrieval corpus at inference; VeriTest-53K questions state the same two fields and are thus directly comparable.
Each problem is scored by its nearest corpus neighbour under word-level ROUGE-L and LFM2.5-Embedding-350M cosine similarity.}

\yg{No benchmark problem has a neighbour above 0.9 on either measure (Table~\ref{tab:contamination}); the closest reach 0.667 ROUGE-L and 0.862 cosine.
Median lexical overlap is 0.285 and 0.437, whereas cosine exceeds 0.7 for most pairs: both sides are Verilog specifications sharing vocabulary and framing, which is domain relatedness rather than problem-level leakage.}

\begin{table}[htbp]
\centering
\caption{\yg{Nearest-neighbour overlap between benchmark problems (prompt + header) and the VeriTest-53K retrieval corpus.}}
\label{tab:contamination}
\resizebox{\columnwidth}{!}{
\begin{tabular}{l|cccc|cccc}
\toprule
 & \multicolumn{4}{c|}{\textbf{ROUGE-L}} & \multicolumn{4}{c}{\textbf{Embedding cosine}} \\
\textbf{Benchmark} & med. & p90 & max & $\geq$0.9 & med. & p90 & max & $\geq$0.9 \\
\midrule
\yg{VerilogEval-v2 (156)} & \yg{0.285} & \yg{0.400} & \yg{0.667} & \yg{0\%} & \yg{0.750} & \yg{0.800} & \yg{0.862} & \yg{0\%} \\
\yg{ResBench (56)} & \yg{0.437} & \yg{0.621} & \yg{0.764} & \yg{0\%} & \yg{0.730} & \yg{0.813} & \yg{0.850} & \yg{0\%} \\
\bottomrule
\end{tabular}
}
\end{table}

\yg{Exact port signatures (direction, bit width, and name, ignoring clock/reset-only stubs) recur for 52/156 VerilogEval-v2 problems but only 3/56 in ResBench.
Such matches reflect standard HDL skeletons rather than shared problems: an 8-bit input to 8-bit output interface constrains nothing about the function to implement, and the more distinctive interfaces in ResBench almost never recur.}

\yg{Diversity is the other side.
VeriJudge-47K holds 47,375 records over 12,195 distinct problem statements (each paired with multiple candidates); VeriTest-53K pairs one testbench with each of its 53,015 distinct questions, so exact repetitions are already collapsed.
Scoring every distinct statement by its nearest \textit{other} statement in the same corpus gives a median cosine of 0.945 and 0.897, seemingly high but expected in a domain where every item is a Verilog specification.
These values calibrate Table~\ref{tab:contamination}: the closest benchmark problem reaches only 0.862, below the typical intra-corpus similarity.
Benchmark problems thus lie further from our corpora than corpus items lie from each other.}

\yg{Residual overlap is benign: a training record pairs a requirement with a sampled candidate and a Yes/No verdict, so a recurring requirement carries no reference answer.
Pre-training exposure remains outside our control, but every reranker in Section~\ref{sec:eval} scores the same candidates and therefore shares it.}

\subsection{Threats to Validity}
\label{sec:threats}

\noindent\textbf{Internal Validity.}
To avoid the risk of implementation errors, we use unit testing and manual verification on sampled cases.
To reduce randomness from candidate sampling and majority voting, we fix random seeds across all experiments.
To mitigate data contamination, we remove samples with ROUGE-L $> 0.9$ against any benchmark problem from VeriJudge-47K and VeriTest-53K\yg{, and audit the residual overlap in Section~\ref{subsec:contamination}}.

\noindent\textbf{External Validity.}
We evaluate on VerilogEval-v2 (156 problems) and ResBench (56 problems), which may not fully represent industrial-scale designs.
Future work should validate {\tool} on larger proprietary benchmarks.
We focus on Verilog; while the dual-channel design is language-agnostic in principle, \yg{transferring it to another HDL requires rebuilding both corpora and retraining, so we make no claim beyond Verilog. Section~\ref{subsec:rq1} instead characterizes where the method underperforms within its stated scope}.
Additionally, the VeriJudge-47K and VeriTest-53K datasets are curated using LLMs (GPT-4o and GLM-4), which may introduce teacher model biases. We mitigate this by using multi-teacher distillation and strict compiler-in-the-loop verification.

\noindent\textbf{Construct Validity.}
Correctness is defined by execution on ground-truth testbenches, which may miss subtle bugs.
\yg{Section~\ref{subsec:rq4} finds the generated testbenches permissive, leaving about a quarter of clusters mixed.
The fusion weight $\alpha{=}0.6$ is a global constant that cannot adapt to an individual weak testbench; performance is stable over $\alpha \in [0.1, 0.6]$ (Fig.~\ref{fig:alpha_analysis}), and conditioning $\alpha$ on per-problem anchor quality is left to future work.}

\noindent\textbf{Cost and Practicality.}
On a single RTX 4090, per-problem latency is ${\sim}$\textbf{15--20s} (vs.\ ${\sim}2$s greedy), a ${\sim}10\times$ increase dominated by testbench generation (${\sim}1$K tokens), simulation (${\sim}0.5$s), and reasoning calls (${\sim}30$K tokens total).
\yg{Cheaper settings hurt: $n{=}1$ leaves contradictory verdicts at 7.5\%/12.8\% instead of 4.8\%/7.1\% (Section~\ref{subsubsec:consistency}), and $k{=}1$ compiles only 38.4\% of testbenches (Section~\ref{subsec:rq4}).
Our configuration is thus near the diminishing-returns point; remaining latency is better addressed by confidence-based early stopping, quantization, or a smaller distilled judge, which we leave to future work.
Whether the cost is justified depends on the generator: gains reach +18.59 on Open-Coder while the strongest models leave little for any reranker to recover, so the overhead is best spent where the Pass@1-to-Pass@10 gap is wide.}

\section{Conclusion}
\label{sec:conclusion}

We address the Verilog code reranking problem to bridge the gap between Pass@$k$ potential and Pass@1 reality.
Through empirical analysis, we identify two critical limitations: poor domain transferability and reasoning hallucination.
We propose {\tool}, a dual-channel framework that independently acquires execution and reasoning signals \yg{and anchors reasoning to execution behavior so that execution-equivalent candidates are scored alike.
It raises average Pass@1 by over 11\% and 15\% on VerilogEval-v2 and ResBench, ranking first in 15 of the 18 configurations}.
Future work includes extending to other low-resource HDLs, validating on industrial-scale benchmarks, and exploring formal verification as a complementary signal.

\section*{Acknowledgements}

This work was supported by National Key R\&D Program of China (No. 2024YFB4506400).
Guang Yang is also supported by the Postdoctoral Fellowship Program of CPSF under Grant Number GZC20260902.
\bibliographystyle{IEEEtran}
\bibliography{main}

\end{document}